\pdfoutput=1
\documentclass{IEEEtran}
\usepackage{amsthm}
\usepackage{mathtools}
\usepackage[OT2,T1]{fontenc}
\usepackage{hyperref}
\usepackage{cite}
\usepackage{color}
\usepackage{varwidth}

\usepackage{microtype}
\DeclareMicrotypeSet{no-tt}{family={rm*,sf*}}
\microtypesetup{protrusion=no-tt}

\DeclareFontFamily{OT2}{cmr}{\hyphenchar\font45}
\DeclareFontShape{OT2}{cmr}{m}{n}{<->wncyr10}{}

\DeclarePairedDelimiter{\floor}{\lfloor}{\rfloor}
\renewcommand{\c}{,\discretionary{}{}{}}
\newcommand{\Q}{Q_{+}}
\newcommand{\R}{\hat R}
\newcommand{\hattimes}{\mathbin{\hat\times}}
\newcommand{\shuffle}{\mathbin{\text{\fontencoding{OT2}\selectfont x}}}
\newcommand{\new}{\text{new}}
\renewcommand{\d}{\mathrlap{\overline{\color{white}{\vphantom{I}\mathtt{q}}}}}
\DeclareMathOperator{\doub}{d}
\DeclareMathOperator{\n}{n}
\DeclareMathOperator{\rowsum}{sum}
\newcommand*{\email}[1]{\href{mailto:#1}{#1}}

\newtheorem{theorem}{Theorem}
\newtheorem{lemma}{Lemma}
\newtheorem{corollary}{Corollary}
\theoremstyle{definition}
\newtheorem{example}{Example}


\setcounter{MaxMatrixCols}{32}

\begin{document}
\title{New Infinite Families of Perfect Quaternion Sequences and Williamson Sequences}

\author{Curtis~Bright,
Ilias~Kotsireas~\IEEEmembership{Member,~IEEE},
and~Vijay~Ganesh
\thanks{Manuscript received May 3, 2019; revised July 21, 2020; accepted July 22, 2020. Date of publication August 13, 2020; date of current version
November 20, 2020. \emph{(Corresponding author: Curtis Bright.)}}%
\thanks{Curtis Bright was with the Department of Electrical and Computer Engineering at the University of Waterloo, Waterloo, ON N2L 3G1, Canada.  He is now with the School of Computer Science at the University of Windsor, Windsor, ON N9B 3P4, Canada (e-mail: \email{cbright@uwindsor.ca}).}%
\thanks{Ilias Kotsireas is with the Department of Physics and Computer Science at Wilfrid Laurier University, Waterloo, ON N2L 3C5, Canada (e-mail: \email{ikotsire@wlu.ca}).}%
\thanks{Vijay Ganesh is with the Department of Electrical and Computer Engineering at the University of Waterloo, Waterloo, ON N2L 3G1, Canada (e-mail: \email{vganesh@uwaterloo.ca}).}%
\thanks{Communicated by K.-U. Schmidt, Associate Editor for Sequences.}%
\thanks{Digital Object Identifier 10.1109/TIT.2020.3016510}%
}

\IEEEpubid{\begin{minipage}{\textwidth}\ \\[12pt] \centering%
0018-9448 © 2020 IEEE. Personal use is permitted, but republication/redistribution requires IEEE permission.\\
See \url{https://www.ieee.org/publications/rights/index.html} for more information.
\end{minipage}}

\hypersetup{
hidelinks,
pdfauthor={Curtis Bright, Ilias Kotsireas, Vijay Ganesh},
pdftitle={New Infinite Families of Perfect Quaternion Sequences and Williamson Sequences}}

\maketitle

\begin{abstract}\boldmath
We present new constructions for
perfect and odd perfect sequences over the quaternion group~$Q_8$.
In particular, we show for the first time
that perfect and odd perfect quaternion sequences exist in all lengths~$2^t$ for $t\geq0$.
In doing so we disprove the quaternionic form of Mow's conjecture
that the longest perfect $Q_8$-sequence that can be constructed 
from an orthogonal array construction is of length 64.
Furthermore, we use a connection to combinatorial design theory
to prove the existence of a new infinite class of Williamson sequences,
showing that Williamson sequences of length~$2^t n$ exist for all $t\geq0$
when Williamson sequences of odd length~$n$ exist.  Our constructions
explain the abundance of Williamson sequences in lengths
that are multiples of a large power of two.
\end{abstract}

\begin{IEEEkeywords}
Perfect sequences,
quaternions,
Williamson sequences,
odd perfect sequences,
periodic autocorrelation,
odd periodic autocorrelation,
array orthogonality property
\end{IEEEkeywords}

\section{Introduction}

Sequences that have zero correlation with themselves
after a nontrivial cyclic shift are known as \emph{perfect}~\cite{schmidt2016sequences}.
Such sequences have a long history~\cite{sarwate1979bounds}
and an amazing wealth of applications,
for example, appearing in the 3GPP LTE standard~\cite{park2016optimal}.
Perfect sequences and their generalizations have been applied to
spread spectrum multiple access systems~\cite{suehiro1988modulatable},
radar systems~\cite{kretschmer1991low},
fast start-up equalization~\cite{qureshi1977fast},
channel estimation and synchronization~\cite{gong2013large},
peak-to-average power ratio reduction~\cite{rahmatallah2013peak},
image watermarking~\cite{tirkel1998image}, and
constructing complementary sets~\cite{popovic1990complementary}.
They also tend to have small aperiodic
correlation~\cite{zhang1993polyphase} and small
ambiguity function sidelobes~\cite{popovic1992generalized}.

One of the first researchers to study perfect sequences was
Heimiller~\cite{heimiller1961phase}
who in 1961 gave a construction for perfect sequences
using matrices with array orthogonality.
His construction generated perfect sequences of length
$p^2$ over the complex $p$th roots of unity for any prime $p$.
Shortly after Heimiller's paper was published, Frank
and Zadoff published a response~\cite{frank1962phase}
pointing out that Frank had discovered the same construction
a decade prior as an aircraft
engineer at the Sperry Gyroscope Company.
Moreover, Frank was
granted a patent for a communication system based on
his sequences~\cite{frank1963phase}.

\IEEEpubidadjcol
Frank's construction generates perfect sequences of
length~$n^2$ over the complex $n$th roots of unity.
Given the theoretical elegance of perfect sequences and their
importance to many fields of engineering
it would be extremely interesting and useful if a construction
could be found that produces perfect
sequences longer than $n^2$ using $n$th roots of unity.
However, over 60 years
of effort has failed to find such a construction and
it has been conjectured by
Mow~\cite{mow1996new} that such a construction does not exist.
In light of this, several researchers have searched for perfect sequences
over other alphabets such as the group of quaternions~\cite{kuznetsov2017}.  Quaternions are generalizations
of the complex numbers
that include the additional numbers $j$ and $k$ that satisfy the relationships
\[ i^2=j^2=k^2=-1, \quad ij = k,\quad jk = i,\quad ki = j . \]
Note that these relationships imply that quaternions are generally noncommutative.

Communication systems have been described that are based on the quaternions~\cite{zetterberg1977codes,sethuraman2003full}
and standard mathematical techniques like the Fourier transform have
been generalized to the quaternions for usage in signal processing~\cite{bulow2001hypercomplex} and image processing~\cite{ell2007hypercomplex}.
Perfect sequences over the quaternions were first studied by Kuznetsov~\cite{kuznetsov2009perfect}.
Kuznetsov and Hall~\cite{kuznetsov2010perfect} showed that Mow's conjecture cannot be directly extended to these
sequences by constructing a perfect sequence over a quaternion alphabet
with~$24$ elements and whose length is over 5 billion.
In 2012, Barrera Acevedo and Hall~\cite{acevedo2012perfect} constructed
perfect sequences over the alphabet $\{\pm1,\pm i,j\}$ in all lengths of the form
$q+1$ where $q\equiv1\pmod{4}$ is a prime power.
Most recently, Blake~\cite{blake2016constructions} has run extensive searches for perfect sequences
over the basic quaternion alphabet $Q_8\coloneqq\{\pm1,\pm i,\pm j,\pm k\}$
(see Section~\ref{sec:previouswork} for more on previous work).
The longest perfect $Q_8$-sequence generated from an orthogonal array construction that
Blake found had length $64$.
Intriguingly, this length
is exactly the square of the alphabet size of $8$.
This led Blake to conjecture a quaternionic form of Mow's conjecture
that the quadratic Frank--Heimiller bound applies to perfect sequences
over $Q_8$ generated from an orthogonal array construction.

In this paper we show that Blake's conjecture is false.
This is accomplished via a new construction for perfect quaternion sequences
that can be used to construct arbitrarily long perfect $Q_8$-sequences
(see Section~\ref{sec:perfect}) and odd perfect $Q_8$-sequences (see Section~\ref{sec:nega}).
In particular, we construct the first infinite family of odd perfect
quaternion sequences and show for the first time
that perfect and odd perfect quaternion sequences of length $2^t$ exist for all $t$.
This is starkly different from what happens if one restricts the alphabet
to only include the purely real or complex elements of $Q_8$.
It is known that the longest perfect $\{\pm1\}$-sequence of length $2^t$ is of length four~\cite{turyn1965character}
and the longest perfect $\{\pm1,\pm i\}$-sequence of length $2^t$
is of length sixteen~\cite{turyn1970complex}.
Additionally, the longest odd perfect $\{\pm1\}$-sequence
has length two~\cite{luke2003binary}.

Recently Barrera Acevedo and Dietrich~\cite{acevedo2019new,acevedo2018perfect} discovered a correspondence
between perfect sequences over the quaternions and Williamson sequences
from combinatorial design theory~\cite{williamson1944hadamard} (see Section~\ref{sec:preliminaries}
for the definition of Williamson sequences).
Using this relationship we construct new
Williamson sequences in even lengths, including all powers of two.
Prior to our construction Williamson sequences of length~$2^t$
were only known to exist for $t\leq6$~\cite{bright2018sat}.
See Section~\ref{sec:construction} for our Williamson sequence construction
and Section~\ref{sec:nega} for a construction
for a variant of Williamson sequences that
our Williamson sequence construction relies on.

In 1944, when Williamson introduced the sequences that now bear his name~\cite{williamson1944hadamard}
he showed that the existence of Williamson sequences of odd length $n$
implied the existence of Williamson sequences of length $2n$.
In 1970, twenty-six years later, Turyn~\cite{turyn1970complex} generalized Williamson's result by showing that
the existence of Williamson sequences of odd length $n$ implied
the existence of Williamson sequences of lengths~$2^t n$ for $t\leq4$.
Nearly fifty years after Turyn's result, Barrera Acevedo and Dietrich~\cite{acevedo2019new} improved this to $t\leq6$ in 2019.
In this paper we complete this process of generalizing Williamson's doubling result
by showing the result in fact holds for all $t\geq0$.
In other words, the existence of Williamson sequences of odd length~$n$
implies the existence of Williamson sequences of length $2^t n$ for all $t$
and this provides a large new infinite class of Williamson sequences.

Exhaustive searches for Williamson sequences~\cite{holzmann2008williamson,bright2019applying} have shown that
they exist in all lengths $n<65$ except for $n=35$, $47$, $53$, and $59$.  They are generally
very abundant in the even lengths (particularly
in the lengths that are divisible by a large power of two) but not in the odd lengths.
For example, over 50,000 inequivalent sets of Williamson sequences
exist in length~$64$ but fewer than 100 sets of Williamson
sequences exist in all odd lengths up to $65$.
Previously this dichotomy was unexplained
but the constructions that we provide in this paper produce
approximately 75\% of the Williamson sequences that exist in all even lengths $n\leq70$
(see Section~\ref{sec:conclusion}).

\section{Preliminaries}\label{sec:preliminaries}

In this section we provide the preliminaries necessary to explain our
construction for perfect quaternion sequences, odd perfect quaternion sequences, and Williamson sequences.
First we define the concept of sequence perfection in terms of the amount of
correlation that a sequence has with cyclically shifted copies of itself.
Following this, we discuss the array orthogonality property used in several
constructions for perfect sequences.
We then define Williamson sequences and finally present Barrera Acevedo and Dietrich's
equivalence between perfect quaternion sequences and Williamson sequences.

\subsection{Complementary sequences and perfect sequences}

The \emph{aperiodic} crosscorrelation of two sequences $A=[a_0,\dotsc,a_{n-1}]$
and $B=[b_0,\dotsc,b_{n-1}]$ of length~$n$ is given by
\[ C_{A,B}(t) \coloneqq \sum_{r=0}^{n-t-1} a_r b_{r+t}^*
\]
where $z^*$ denotes the conjugate of $z$~\cite{sarwate1980crosscorrelation}.
For a general quaternionic number we have
$(a+bi+cj+dk)^*\coloneqq a-bi-cj-dk$ where $a$, $b$, $c$, $d$ are real.
The aperiodic autocorrelation of $A$
is given by $C_A(t)\coloneqq C_{A,A}(t)$.
More generally, we also define the \emph{periodic} and \emph{odd periodic} (or \emph{negaperiodic})
crosscorrelations by
\begin{align*}
R_{A,B}(t) &\coloneqq C_{A,B}(t)+C_{B,A}(n-t)^* &&\text{and} \\
\R_{A,B}(t) &\coloneqq C_{A,B}(t) - C_{B,A}(n-t)^* &&\text{for $0\leq t<n$}.
\end{align*}
The periodic and odd periodic autocorrelations
of $A$ are given by $R_A(t)\coloneqq R_{A,A}(t)$ and $\R_A(t)\coloneqq\R_{A,A}(t)$ respectively.
These functions may be extended to all integers $t$ via the expressions
\begin{gather*}
R_{A,B}(t) \coloneqq \sum_{r=0}^{n-1} a_r b^*_{r+t\bmod n}
\intertext{and}
\R_{A,B}(t) \coloneqq \sum_{r=0}^{n-1}(-1)^{\floor{(r+t)/n}}a_r b^*_{r+t\bmod n} .
\end{gather*}

A set
$S$ of sequences of length~$n$
is called \emph{complementary} if $\sum_{A\in S} C_{A}(t)=0$ for all \mbox{$1\leq t<n$}.\footnote{Technically $S$ should be defined to be a multiset but we follow standard convention and refer to it as a set.}
Similarly, $S$ is called \emph{periodic complementary} if $\sum_{A\in S} R_{A}(t)=0$
and \emph{odd periodic complementary} (or \emph{negacomplementary}) if $\sum_{A\in S}\R_{A}(t)=0$ for all $1\leq t<n$.
A single sequence $A$ is called \emph{perfect} if $R_A(t)=0$ for all $1\leq t<n$
and \emph{odd perfect} if $\R_A(t)=0$ for all $1\leq t<n$.

Note that the periodic correlation values $R_A(t)$ of a sequence are preserved under
the \emph{cyclic shift} operator $[a_0,\dotsc,a_{n-2},a_{n-1}]\mapsto[a_{n-1},a_0,\dotsc,a_{n-2}]$
and the negaperiodic correlation values $\R_A(t)$ of a sequence are preserved under the
\emph{negacyclic shift} operator
$[a_0,\dotsc,a_{n-2},a_{n-1}]\mapsto[-a_{n-1},a_0,\dotsc,a_{n-2}]$ (see~\cite{bomer1990periodic,parker2001even}).
Therefore applying a cyclic shift to any sequence in a set of periodic complementary sequences
or applying a negacyclic shift to any sequence in a set of negacomplementary sequences
does not disturb the property of the set being periodic complementary or negacomplementary.

Let $(-1)*X$ denote the sequence whose $r$th entry is $(-1)^r x_r$, i.e., the \emph{alternating
negation} operation.  A set of periodic complementary sequences of odd length $n$
can be converted into a set of negacomplementary sequences and vice versa by applying
the alternating negation operation (see~\cite{xia2006hadamard}).

\begin{lemma}\label{lem:nega}
If $n$ is odd then $(A,B,C,D)$ are periodic complementary sequences of length $n$
if and only if $((-1)*A, (-1)*B,(-1)*C, (-1)*D)$ are negacomplementary sequences.
\end{lemma}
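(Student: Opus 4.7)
The plan is to reduce the four-sequence claim to a single-sequence identity of the form $\hat R_{(-1)*A}(t) = (-1)^t R_A(t)$, valid for odd $n$, and then sum this identity over the four sequences. Since $(-1)^t$ is a nonzero scalar, the total periodic correlation at shift $t$ vanishes if and only if the total negaperiodic correlation at shift $t$ vanishes, which will give the equivalence.

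First I would write out $\hat R_{A'}(t)$, where $A' \coloneqq (-1)*A$, directly from the definition of negaperiodic autocorrelation given in the preliminaries. Letting $a'_r = (-1)^r a_r$ and noting that $(-1)^r$ is real and so commutes with $a_r$ and is unchanged by conjugation, this gives
\[
\hat R_{A'}(t) = \sum_{r=0}^{n-1} (-1)^{\floor{(r+t)/n}}\,(-1)^r\,(-1)^{(r+t) \bmod n}\, a_r\, a^*_{r+t \bmod n}.
\]
All the sign factors are now collected into a single exponent, so the task becomes showing this exponent is congruent to $t \pmod 2$ for every $r$ (so that the sum reduces to $(-1)^t R_A(t)$).

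The key arithmetic step is the identity $r + t = n\floor{(r+t)/n} + ((r+t) \bmod n)$, which gives
\[
\floor{(r+t)/n} + ((r+t) \bmod n) = (r+t) + (1-n)\floor{(r+t)/n}.
\]
Because $n$ is odd, $1 - n$ is even, so modulo~$2$ the right-hand side equals $r+t$, and thus the full exponent in the sum is $r + (r+t) = 2r + t \equiv t \pmod 2$. This yields the desired single-sequence identity $\hat R_{A'}(t) = (-1)^t R_A(t)$.

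Finally, summing over $A \in \{A,B,C,D\}$ gives $\sum_X \hat R_{(-1)*X}(t) = (-1)^t \sum_X R_X(t)$, and since $(-1)^t \neq 0$ the periodic complementarity of $(A,B,C,D)$ is equivalent to the negacomplementarity of $((-1)*A,(-1)*B,(-1)*C,(-1)*D)$, as required. The only real obstacle is the parity bookkeeping in the middle step; once the oddness of $n$ is used to kill the $n\floor{\cdot}$ contribution mod~$2$, the rest is immediate, and the same argument applies verbatim to any number of sequences (so the restriction to four is cosmetic).
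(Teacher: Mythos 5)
Your proof is correct. Note that the paper itself gives no proof of this lemma---it is stated without proof and attributed to the reference on 4-suitable negacyclic matrices---so there is no in-paper argument to compare against; your computation is a self-contained verification of the cited fact. The route you take is the natural one: writing $a'_r=(-1)^r a_r$, pulling the real sign factors through the conjugation, and collecting the exponent $\floor{(r+t)/n}+r+((r+t)\bmod n)$, which by the division identity $r+t=n\floor{(r+t)/n}+((r+t)\bmod n)$ equals $2r+t+(1-n)\floor{(r+t)/n}\equiv t\pmod 2$ precisely because $n$ is odd. This yields $\R_{(-1)*A}(t)=(-1)^tR_A(t)$, and summing over the four sequences gives the equivalence since $(-1)^t\neq0$. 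Your closing remark is also accurate: the argument is independent of the number of sequences in the set, and the restriction to quadruples in the statement merely reflects the Williamson setting in which the lemma is applied. The one place where the hypothesis is genuinely used---and which your parity bookkeeping isolates cleanly---is that $1-n$ must be even; for even $n$ the identity fails and indeed the paper handles that case by a different device (negacyclic shifts, Lemma~\ref{lem:pal}).
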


\begin{example}
$(\verb|+--|, \verb|+--|, \verb|+--|, \verb|+++|)$ is a set of complementary sequences of length~$3$
and $(\verb|++-|, \verb|++-|, \verb|++-|, \verb|+-+|)$ is the set of negacomplementary sequences
generated from it using Lemma~\ref{lem:nega}.  Note that we follow the convention of writing $1$s
by $\verb|+|$ and $-1$s by $\verb|-|$.
\end{example}

\subsection{Matrices with array orthogonality}

The sequences in a set $S$ are \emph{periodically uncorrelated} if
any two distinct sequences $A$, $B$ in $S$ satisfy $R_{A,B}(t)=0$ for all $0\leq t<n$.
If a set of periodic complementary sequences $S=\{S_1,\dotsc,S_m\}$ (with each sequence of length~$n$ a multiple of $m$)
are periodically uncorrelated then the $n\times m$ matrix whose columns are given by
$S_1$, $\dotsc$, $S_m$ is said to have \emph{array orthogonality}.

Matrices with array orthogonality are important because they are used
in many constructions for perfect sequences such as in the Frank--Heimiller and Mow constructions.
In particular, the sequence formed
by concatenating the rows of a matrix with array orthogonality is perfect.  Additionally,
matrices with array orthogonality are themselves perfect arrays.
\emph{Perfect arrays} are often studied as a way of generalizing the concept of perfection 
from sequences to matrices (e.g., see~\cite{bomer1990two,fan1995synthesis}).
They are defined to be $n\times m$ matrices $A=(a_{r,s})$ that satisfy
\[ \sum_{r=0}^{n-1}\sum_{s=0}^{m-1}a_{r,s}a^*_{r+t\bmod n,s+t'\bmod n} = 0 \]
for all $(t,t')\neq(0,0)$ with $0\leq t<n$, $0\leq t'<m$.

\begin{example}
The columns of the matrix
\[ \begin{bmatrix}
\verb|+| & \verb|+| \\
\verb|+| & \verb|-|
\end{bmatrix} \]
have array orthogonality and therefore this is a perfect array.  It generates the perfect sequence $[\verb|+++-|]$.
\end{example}

\subsection{Williamson and nega Williamson sequences}

First, we describe the symmetry properties that are used in the definition
of Williamson sequences and will be important in our construction for Williamson sequences.
A sequence $A$ of length $n$ is called \emph{symmetric} if $a_t=a_{n-t}$ for all $1\leq t<n$
and is \emph{palindromic} if $a_t=a_{n-t-1}$ for all $0\leq t<n$.
For example, $[x,y,z,y]$ is symmetric and $[y,z,y]$ is palindromic.
Note that a sequence is symmetric if and only
if the subsequence formed by removing its first element is palindromic.
Additionally, we call a sequence \emph{antipalindromic} if $a_t=-a_{n-t-1}$ for all $0\leq t<(n-1)/2$
and \emph{antisymmetric} if $a_t=-a_{n-t}$ for all $1\leq t<n/2$.
If $[X;Y]$ denotes sequence concatenation and~$\tilde X$
denotes the reversal of $X$ then $[X;\tilde X]$ is palindromic
and $[X;-\tilde X]$ is antipalindromic.

A quadruple of symmetric $\{\pm1\}$-sequences $(A,B,C,D)$ are known as \emph{Williamson}
if they are periodic complementary, i.e., if $R_A(t)+R_B(t)+R_C(t)+R_D(t)=0$ for all $1\leq t<n$.
Additionally, we call a quadruple of $\{\pm1\}$-sequences $(A,B,C,D)$ \emph{nega Williamson}
if they are negacomplementary, i.e., $\R_A(t)+\R_B(t)+\R_C(t)+\R_D(t)=0$ for all $1\leq t<n$.

As is conventional, we require by definition that Williamson sequences are symmetric.
If the symmetry condition is replaced with the weaker property that the sequences
are instead \emph{amicable} (i.e., $R_{X,Y}(t)=R_{Y,X}(t)$ for all $t$ and $X,Y\in\{A,B,C,D\}$) then the
sequences are known as \emph{Williamson-type}.
For nega Williamson sequences we do not require them to be symmetric.
Instead, our work
has discovered the importance of \emph{palindromic} and \emph{antipalindromic} nega Williamson sequences;
see Section~\ref{sec:construction} for details.

\begin{example}\label{ex:simple}
$(\verb|++|,\verb|++|,\verb|+-|,\verb|+-|)$ are Williamson sequences of length~$2$.
Similarly, $(\verb|+-|,\verb|+-|,\verb|+-|,\verb|+-|)$ are antipalindromic nega Williamson sequences
and $(\verb|++|,\verb|++|,\verb|++|,\verb|++|)$ are palindromic nega Williamson sequences.
$(\verb|++-+|,\verb|++-+|,\verb|++-+|,\verb|++-+|)$ are Williamson sequences of length~$4$.
Similarly, $(\verb|+-+-|,\verb|+-+-|,\verb|++--|,\verb|++--|)$ are antipalindromic nega Williamson sequences
and $(\verb|+--+|,\verb|+--+|,\verb|++++|,\verb|++++|)$ are palindromic nega Williamson sequences.
\end{example}

\subsection{The Barrera Acevedo--Dietrich correspondence}

Let $Q_8\coloneqq\{\pm1,\pm i,\pm j,\pm k\}$ be the quaternion group
and $\Q\coloneqq Q_8\cup qQ_8$ where $q\coloneqq(1+i+j+k)/2$
(note that $\Q$ is a set of sixteen quaternions that is \emph{not} a group).
Barrera Acevedo and Dietrich~\cite[Theorem~2.4]{acevedo2019new} show that there is an equivalence between
Williamson-type sequences and perfect sequences over $\Q$.

\begin{theorem}\label{thm:corr}
There is a one-to-one correspondence between sets of Williamson-type sequences of length~$n$ and
perfect sequences of length~$n$ over $\Q$.
\end{theorem}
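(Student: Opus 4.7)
The plan is to construct an explicit entrywise bijection $\Phi$ between Williamson-type quadruples of length $n$ and perfect $Q_+$-sequences of length $n$. Since $|Q_+| = 16 = 2^4$, I would fix a per-position bijection $\phi : \{\pm 1\}^4 \to Q_+$; a convenient formula is
\[ \phi(a, b, c, d) := q^* (a + bi + cj + dk)/2. \]
This is a bijection onto $Q_+$ because the naive map $(a, b, c, d) \mapsto (a + bi + cj + dk)/2$ lands in $qQ_8 \cup q^2 Q_8$ (two of the three cosets of $Q_8$ in the binary tetrahedral group, split by the sign of $abcd$), and left-multiplication by $q^* = q^{-1}$ shifts this image into $Q_8 \cup qQ_8 = Q_+$. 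Given a Williamson-type quadruple $(A, B, C, D)$, I would set $S_r := \phi(a_r, b_r, c_r, d_r)$.

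The heart of the proof is the computation of the periodic autocorrelation $R_S(t) = \sum_r S_r S_{r+t}^*$ and its decomposition into quaternion components. Writing $S_r = q^* T_r$ with $T_r := (a_r + b_r i + c_r j + d_r k)/2$, a short calculation gives $R_S(t) = q^* R_T(t)\, q$, so $R_S(t) = 0$ iff $R_T(t) = 0$. I would then expand $R_T(t)$ in the basis $\{1, i, j, k\}$ using $i^2 = j^2 = k^2 = -1$, $ij = k$, $jk = i$, $ki = j$; the real component evaluates to $(R_A(t) + R_B(t) + R_C(t) + R_D(t))/4$, while the three imaginary components are linear combinations of the antisymmetric crosscorrelation differences $R_{X, Y}(t) - R_{Y, X}(t)$ for distinct $X, Y \in \{A, B, C, D\}$, with each of the six unordered pairs $\{X, Y\}$ appearing exactly once across the three imaginary components.

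The inverse correspondence is obtained by applying $\phi^{-1}$ entrywise, so bijectivity is immediate. Vanishing of the real component of $R_T(t)$ is exactly the periodic complementary condition, and vanishing of the three imaginary components encodes the amicability conditions $R_{X, Y}(t) = R_{Y, X}(t)$; together these are precisely the Williamson-type conditions, giving the equivalence in both directions.

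The main obstacle is verifying that vanishing of the three coupled imaginary components is really equivalent to the six individual amicability conditions: a naive count of equations (four quaternion components of $R_S(t)$ versus seven Williamson-type conditions) suggests a possible mismatch, so one must exploit the $\{\pm 1\}$-valued structure of the sequences together with the identity $R_{Y, X}(t) = R_{X, Y}(n - t)$ for real sequences to collapse the three coupled conditions, lag by lag, into the six individual amicability identities. This bookkeeping --- matching the antisymmetric differences to the six unordered pairs and using the periodic symmetry of real crosscorrelations --- is the technical content the argument hinges on.
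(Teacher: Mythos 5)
Your forward direction is sound: the component expansion of $R_T(t)$ in the basis $\{1,i,j,k\}$ does give the real part $\tfrac14\sum_X R_X(t)$ and three imaginary parts that are sums of two antisymmetric differences each (pairing $\{A,B\}$ with $\{C,D\}$, $\{A,C\}$ with $\{B,D\}$, and $\{A,D\}$ with $\{B,C\}$), so a Williamson-type quadruple yields a perfect $\Q$-sequence. The genuine gap is the converse, which you flag as ``the main obstacle'' but do not close, and your proposed tool cannot close it. Writing $D_{X,Y}(t)\coloneqq R_{X,Y}(t)-R_{Y,X}(t)$, perfection gives only the three coupled conditions of the form $D_{A,B}(t)+D_{C,D}(t)=0$ (one per imaginary component), whereas Williamson-type requires all six differences to vanish individually. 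The identity $R_{Y,X}(t)=R_{X,Y}(n-t)$ for real sequences only says $D_{X,Y}(n-t)=-D_{X,Y}(t)$; applying it turns the coupled equation at lag $t$ into the same coupled equation at lag $n-t$ and provides no mechanism for separating $D_{A,B}(t)+D_{C,D}(t)=0$ into $D_{A,B}(t)=0$ and $D_{C,D}(t)=0$. A naive degrees-of-freedom count (four scalar equations per lag versus seven conditions per lag) is not merely a bookkeeping issue to be ``collapsed''; it is the actual mathematical content of the converse, and nothing in the proposal supplies the extra input (the $\{\pm1\}$-valued structure across all lags simultaneously) that would be needed.

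For comparison: the paper does not prove this theorem at all --- it imports it as Theorem~2.4 of Barrera Acevedo and Dietrich and only records the explicit entrywise table realizing the correspondence. Their proof does not proceed by the direct autocorrelation expansion you attempt; it goes through an equivalence between perfect $\Q$-sequences and $(4n,2,4n,2n)$-relative difference sets in $C_n\times Q_8$, and it is precisely that detour which delivers the amicability conditions that your three coupled equations do not. So your setup (the bijection $\phi$ onto $\Q$ via left multiplication by $q^*$, and the identity $R_S(t)=q^*R_T(t)q$) is a reasonable reformulation, but as written the proof establishes only one implication of the claimed one-to-one correspondence.
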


This correspondence is made explicit through the following mapping between the $r$th
entries of a set of Williamson-type sequences $(a_r,b_r,c_r,d_r)$ and the $r$th entry
of the corresponding perfect sequence $s_r$:
\[
\begin{matrix}
a_r & \verb|-| &  \verb|+| &  \verb|+| &  \verb|+| &  \verb|+| &  \verb|+| &  \verb|+| &  \verb|+|  \\
b_r & \verb|-| & \verb|-| &  \verb|+| & \verb|-| & \verb|-| &  \verb|+| &  \verb|+| & \verb|-|  \\
c_r & \verb|-| & \verb|-| & \verb|-| &  \verb|+| & \verb|-| & \verb|-| &  \verb|+| &  \verb|+|  \\
d_r & \verb|-| &  \verb|+| & \verb|-| & \verb|-| & \verb|-| &  \verb|+| & \verb|-| &  \verb|+|  \\ \hline
s_r &  1 &  i &  j &  k &  q & qi & qj & qk 
\end{matrix}
\]
Additionally, we have the rule that if $(a_r,b_r,c_r,d_r)$ maps to~$s_r$ then $(-a_r,-b_r,-c_r,-d_r)$ maps to $-s_r$.

Because of this theorem a construction for Williamson sequences of length~$n$ also
produces symmetric perfect $\Q$\nobreakdash-sequences.  In this paper
we state our construction in terms of Williamson sequences with the understanding
that it equivalently produces perfect $\Q$-sequences.
In Section~\ref{sec:perfect} we also show how our construction can be used
to produce perfect $Q_8$-sequences in many lengths including all powers of two.

If the entries of a set of Williamson sequences $(A,B,C,D)$
of length $n$ satisfy $a_rb_rc_rd_r=1$ for all $0\leq r<n$ then
the Barrera Acevedo--Dietrich correspondence produces perfect $Q_8$-sequences.
For this reason we say that a quadruple of sequences has the \emph{$Q_8$-property}
if the entries of its sequences satisfy $a_rb_rc_rd_r=1$ for all $0\leq r<n$.
In his original paper~\cite{williamson1944hadamard} Williamson proved that the entries of
all Williamson sequences in odd lengths~$n$ satisfy $a_rb_rc_rd_r=-a_0b_0c_0d_0$
for $1\leq r<n$.  As a consequence, no Williamson sequence
of odd length $n>1$ can have the $Q_8$-property.  However, many Williamson
sequences in even lengths have the $Q_8$-property.  Exhaustive searches~\cite{bright2019applying}
have found Williamson sequences with the $Q_8$-property
in all even lengths $n\leq2^5$ except for $6$, $12$, and $28$.

Barrera Acevedo and Dietrich also use the periodic product construction of L\"uke~\cite{luke1988sequences} that
generates perfect sequences from shorter perfect sequences.
Let $X\times Y$ be the sequence whose $r$th entry is $x_{r\bmod n}y_{r\bmod m}$
for $0\leq r<nm$ (where $X$ has length $n$ and $Y$ has length $m$).

\begin{theorem}\label{thm:comp}
Suppose $X$ and $Y$ have coprime lengths $n$ and~$m$.
If $X$ is a perfect $Q_8$-sequence and $Y$ is a perfect $\Q$\nobreakdash-sequence then
$X\times Y$ is a perfect $\Q$-sequence.
Furthermore, if $X$ and $Y$ are symmetric then $X\times Y$ is symmetric.
\end{theorem}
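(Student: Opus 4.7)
My plan is to prove perfectness of $Z \coloneqq X\times Y$ by unfolding the defining sum of $R_Z(t)$, reindexing via the Chinese Remainder Theorem, and then invoking the perfection of $X$ and $Y$ one factor at a time. Fix $1 \leq t < nm$ and set $u \coloneqq t \bmod n$, $v \coloneqq t \bmod m$. Using $z_r = x_{r \bmod n}\, y_{r \bmod m}$ and the quaternion identity $(pq)^* = q^* p^*$, the defining sum reads
\[
R_Z(t) = \sum_{r=0}^{nm-1} x_{r\bmod n}\, y_{r\bmod m}\, y^*_{(r+t)\bmod m}\, x^*_{(r+t)\bmod n} .
\]
Since $\gcd(n,m) = 1$, the map $r \mapsto (r\bmod n,\, r\bmod m)$ is a bijection from $\{0,\ldots,nm-1\}$ onto $\{0,\ldots,n-1\}\times\{0,\ldots,m-1\}$, so I can reindex by $(a,b)$. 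Because the factor $x_a$ appears to the left of every summand in the inner $b$-sum and $x^*_{(a+u)\bmod n}$ to the right, both pull out of the inner sum, yielding
\[
R_Z(t) = \sum_{a=0}^{n-1} x_a\, R_Y(v)\, x^*_{(a+u)\bmod n} .
\]

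The main obstacle is that quaternion noncommutativity forbids factoring this further as $R_X(u)\cdot R_Y(v)$ in the usual way. I would circumvent this by case-splitting on whether $v = 0$. If $v \neq 0$, then $R_Y(v) = 0$ by perfection of $Y$, so every term of the outer sum is zero and $R_Z(t) = 0$. If $v = 0$, then $R_Y(0) = \sum_b y_b y_b^* = \sum_b |y_b|^2 = m$, since a direct calculation gives $qq^* = 1$ and hence every element of $\Q = Q_8 \cup qQ_8$ has unit norm. The scalar $m$ is real, so it commutes out and leaves $R_Z(t) = m\, R_X(u)$. Coprimality of $n$ and $m$ guarantees that $1 \leq t < nm$ cannot have both $u = 0$ and $v = 0$, so in the $v = 0$ case we must have $u \neq 0$, whence $R_X(u) = 0$ by perfection of $X$.

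It remains to verify that $Z$ actually takes values in $\Q$ and to handle symmetry. For closure I would note that conjugation by $q$ normalizes $Q_8$ (for instance $q^{-1} i q = k$, and the analogous computation cyclically permutes $\{i,j,k\}$), so $Q_8 q = q Q_8$ and consequently $Q_8 \cdot \Q = Q_8 \cup Q_8 q Q_8 = Q_8 \cup q Q_8 = \Q$. This is precisely where the hypothesis that $X$ is a $Q_8$-sequence (not merely a $\Q$-sequence) is used, since for example $q\cdot q \notin \Q$. Because $x_a \in Q_8$ and $y_b \in \Q$, each entry $z_r = x_a y_b$ lies in $\Q$. The symmetry claim is a quick direct check: symmetry of $X$ and $Y$ gives $x_{r\bmod n} = x_{(-r)\bmod n}$ and $y_{r\bmod m} = y_{(-r)\bmod m}$ for every $r$, so $z_r = z_{nm - r}$ for $1 \leq r < nm$.
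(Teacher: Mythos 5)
Your proof is correct. Note that the paper does not actually prove Theorem~\ref{thm:comp}; it is quoted from Barrera Acevedo and Dietrich's work (itself building on L\"uke's product construction), so there is no in-paper argument to compare against. Your argument is a sound self-contained derivation: the CRT reindexing reduces $R_{X\times Y}(t)$ to $\sum_a x_a R_Y(v)x^*_{(a+u)\bmod n}$, and you correctly identify the one genuine subtlety --- that noncommutativity blocks the usual factorization $R_{X\times Y}(t)=R_X(u)R_Y(v)$ --- and resolve it by the case split on $v$: when $v\neq 0$ the middle factor vanishes termwise, and when $v=0$ it is the real scalar $m$ (every element of $\Q$ has unit norm since $qq^*=1$), which commutes out and leaves $mR_X(u)$ with $u\neq 0$ forced by coprimality. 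The closure argument via $q^{-1}Q_8q=Q_8$ correctly pinpoints why $X$ must be a $Q_8$-sequence rather than merely a $\Q$-sequence, and the symmetry check is routine and right. No gaps.
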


Theorems~\ref{thm:corr} and~\ref{thm:comp} immediately yield the following corollary.

\begin{corollary}
If a perfect symmetric $Q_8$-sequence of length~$2^t$ exists and Williamson sequences
of odd length $n$ exist then Williamson sequences of length $2^tn$ exist.
\end{corollary}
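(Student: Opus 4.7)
The plan is to chain Theorem~\ref{thm:corr} and Theorem~\ref{thm:comp} together. Start with the hypothesis that Williamson sequences of odd length~$n$ exist; these are symmetric by definition, so applying the Barrera Acevedo--Dietrich correspondence of Theorem~\ref{thm:corr} produces a symmetric perfect $\Q$-sequence $Y$ of length~$n$. Call the given perfect symmetric $Q_8$-sequence of length~$2^t$ by $X$.

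Next, I would observe that $\gcd(2^t, n)=1$ since $n$ is odd, so the hypotheses of Theorem~\ref{thm:comp} are satisfied with $X$ a perfect $Q_8$-sequence and $Y$ a perfect $\Q$-sequence. The theorem then yields that $X\times Y$ is a perfect $\Q$-sequence of length~$2^t n$, and its additional symmetry clause (together with the symmetry of both $X$ and $Y$) guarantees $X\times Y$ is symmetric as well.

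Finally, I would invoke Theorem~\ref{thm:corr} in the reverse direction on $X\times Y$ to pull back a set of Williamson-type sequences of length~$2^t n$. Because the correspondence is defined entry-wise via the explicit table that maps each quadruple $(a_r,b_r,c_r,d_r)$ to a single element $s_r\in\Q$, symmetry of the $\Q$-sequence ($s_r=s_{n'-r}$ where $n'=2^t n$) forces the matching quadruples to agree in the symmetric positions, so each of the four pulled-back sequences is symmetric. A symmetric Williamson-type set is by definition a set of Williamson sequences, completing the argument.

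There is no real obstacle here; the only point needing a moment of care is the last one, that symmetry transports across the correspondence of Theorem~\ref{thm:corr}. This is immediate from the entry-wise nature of the mapping, so the corollary is essentially a direct composition of the two preceding theorems.
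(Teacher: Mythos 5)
Your proof is correct and follows exactly the route the paper intends: the paper states that Theorems~\ref{thm:corr} and~\ref{thm:comp} ``immediately yield'' the corollary, and your chaining of the correspondence, the product construction, and the reverse correspondence (with the observation that symmetry transports entry-wise through the one-to-one mapping) is precisely that argument spelled out. No gaps.
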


Barrera Acevedo and Dietrich also provide examples of symmetric perfect $Q_8$-sequences
for all $t\leq6$.
In this paper we extend their result by showing
that perfect symmetric $Q_8$\nobreakdash-sequences of length $2^t$
exist for all $t\geq0$ and therefore show
Williamson sequences exist in all lengths of the form~$2^t n$
whenever Williamson sequences exist in odd length~$n$.

\begin{example}
Using the Barrera Acevedo--Dietrich correspondence the Williamson sequences
$(\verb|++-+|,\verb|++-+|,\verb|++-+|,\verb|++-+|)$ produce
the perfect $Q_8$-sequence $[\verb|--+-|]$, the Williamson sequences
$(\verb|++--+|,\verb|-+--+|,\verb|-++++|,\verb|-++++|)$
produce the perfect $\Q$-sequence $[\verb|q-jj-|]$
and the Williamson sequences
$(\verb|++--+--+|,\verb|++--+--+|,\verb|+++-+-++|,\verb|+++-+-++|)$
produce the perfect $Q_8$-sequence $[\verb|--j+-+j-|]$.
(We denote $i$, $j$, $k$, and $q$ by $\verb|i|$,
$\verb|j|$, $\verb|k|$, and $\verb|q|$.)
\end{example}

\section{Constructions for Williamson sequences}\label{sec:construction}

Our main construction is based on the following simple sequence operations.
\begin{enumerate}
\item The \emph{doubling} of $X$, denoted by $\doub(X)$, i.e., $\doub(X)\coloneqq[x_0,\dotsc,x_{n-1},x_0,\dotsc,x_{n-1}]$.
\item The \emph{negadoubling} of $X$, denoted by $\n(X)$, i.e., $\n(X)\coloneqq[x_0,\dotsc,x_{n-1},-x_0,\dotsc,-x_{n-1}]$.
\item The \emph{interleaving} of $X$ and $Y$, denoted by $X\shuffle Y$, i.e., $X\shuffle Y\coloneqq[x_0,y_0,x_1,y_1,\dotsc,x_{n-1},y_{n-1}]$.
\end{enumerate}

We will use the following properties of these operations in our construction.
For completeness, proofs of these properties are given in the appendix.  In each property~$X$ and~$Y$ are
arbitrary sequences of the same length and~$t$ is an arbitrary integer.
\begin{enumerate}
\item $R_{\doub(X)}(t)=2R_X(t)$.
\item $R_{\n(X)}(t)=2\R_X(t)$.
\item $R_{\doub(X),\n(Y)}(t)=0$ and $R_{\n(Y),\doub(X)}(t)=0$.
\item $R_{X\shuffle Y}(2t)=R_X(t)+R_Y(t)$.
\item $R_{X\shuffle Y}(2t+1)=R_{X,Y}(t)+R_{Y,X}(t+1)$.
\item If $X$ is symmetric then $\doub(X)$ is symmetric.
\item If $X$ is antipalindromic and of even length then $\n(X)$ is palindromic.
\item If $X$ is symmetric and $Y$ is palindromic then $X\shuffle Y$ is symmetric.
\end{enumerate}

Our main construction for Williamson sequences is given by the following theorem.

\begin{theorem}\label{thm:main}
If $(A,B,C,D)$ are Williamson sequences of even length $n$ and $(A',B',C',D')$
are antipalindromic nega Williamson sequences of length $n$ then
\[ (\doub(A)\shuffle\n(A'),\doub(B)\shuffle\n(B'),\doub(C)\shuffle\n(C'),\doub(D)\shuffle\n(D')) \]
are Williamson sequences of length $4n$.
\end{theorem}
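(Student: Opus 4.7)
The plan is to verify the two defining properties of Williamson sequences for each of the four constructed sequences $E_X \coloneqq \doub(X)\shuffle\n(X')$ (where $X\in\{A,B,C,D\}$): first that each $E_X$ is a symmetric $\{\pm1\}$-sequence of length $4n$, and then that the quadruple is periodic complementary. Each sequence clearly has entries in $\{\pm1\}$ and length $4n$, so only symmetry and periodic complementarity require real work, and both are essentially forced by the eight correlation and symmetry identities listed just before the theorem.

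\textbf{Symmetry.} Since $X$ is symmetric, property~(6) gives that $\doub(X)$ is symmetric. Since $X'$ is antipalindromic of even length $n$, property~(7) gives that $\n(X')$ is palindromic. Then property~(8) immediately implies that $E_X = \doub(X)\shuffle\n(X')$ is symmetric. This applies uniformly to $X\in\{A,B,C,D\}$.

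\textbf{Periodic complementarity.} I would compute $R_{E_X}(t)$ separately at odd and even indices using the interleaving identities. For odd indices, property~(5) expresses $R_{E_X}(2t+1)$ as a sum of two cross-correlations between $\doub(X)$ and $\n(X')$, and property~(3) tells us both vanish; hence $R_{E_X}(2t+1)=0$ for every $t$, and so $\sum_{X} R_{E_X}(2t+1)=0$. For even indices, property~(4) combined with properties~(1) and~(2) yields $R_{E_X}(2s)=2R_X(s)+2\R_{X'}(s)$, so
\[
\sum_{X} R_{E_X}(2s) \;=\; 2\sum_{X} R_X(s) \;+\; 2\sum_{X} \R_{X'}(s).
\]
It remains to show that this sum vanishes for $1\le s\le 2n-1$, which handles all the relevant even indices $2s$ with $1\le 2s<4n$.

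\textbf{The index range and boundary case.} For $1\le s\le n-1$ the first sum vanishes because $(A,B,C,D)$ is Williamson and the second vanishes because $(A',B',C',D')$ is nega Williamson. For $n+1\le s\le 2n-1$, write $s=n+s'$: the function $R_X(\cdot)$ has period $n$ while $\R_{X'}(\cdot)$ satisfies $\R_{X'}(s+n)=-\R_{X'}(s)$ by inspection of the $(-1)^{\lfloor(r+t)/n\rfloor}$ factor in the definition of $\R$, so each of the two sums again vanishes by the Williamson and nega Williamson hypotheses applied at $s'$. The genuine boundary case is $s=n$: here $R_X(n)=R_X(0)=n$ since $X$ is a $\{\pm1\}$-sequence of length $n$, while $\R_{X'}(n)=-\R_{X'}(0)=-n$, so the two sums contribute $4n$ and $-4n$ respectively and cancel. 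I expect this $s=n$ case to be the only subtle step: it is where one sees why it was essential to pair the doubling of a Williamson quadruple with the negadoubling of a nega Williamson quadruple rather than two quadruples of the same flavor. Collecting everything yields $\sum_X R_{E_X}(t)=0$ for all $1\le t<4n$, completing the proof.
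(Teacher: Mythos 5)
Your proposal is correct and follows essentially the same route as the paper's proof: the same symmetry argument via properties 6--8, the same even/odd split via properties 3--5, the reduction $R_{E_X}(2s)=2R_X(s)+2\R_{X'}(s)$ via properties 1--2, and the same cancellation $R_X(n)=n$, $\R_{X'}(n)=-n$ at the boundary index. Your explicit treatment of the range $n<s<2n$ via the periodicity of $R$ and the antiperiodicity of $\R$ is a detail the paper leaves implicit (it relies on the extension of these functions to all integers), but it is the same argument.
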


This theorem can be used to construct a large number of new Williamson sequences, assuming that
sequences that satisfy the preconditions are known.
For example, if~$N$ Williamson sequences
of length~$n$ are known and~$M$ antipalindromic nega Williamson sequences of length~$n$ are known
this theorem immediately implies that at least~$NM$ Williamson sequences of length~$4n$ exist.

We now prove Theorem~\ref{thm:main}.

\begin{proof}
Let $X_\new\coloneqq\doub(X)\shuffle\n(X')$ for $X\in\{A,B,C,D\}$.
By construction it is clear that $X_\new$ is a $\{\pm1\}$-sequence of length $4n$.
Additionally, $X_\new$ is symmetric by property 8 since $\doub(X)$ is symmetric by property 6
and $\n(X')$ is palindromic by property 7.
It remains to show that
\[ R_{A_\new}(t) + R_{B_\new}(t) + R_{C_\new}(t) + R_{D_\new}(t) = 0 \text{ for $1\leq t<4n$} . \]
Note that by properties 4 and 5 we have $R_{X_\new}(t)$ is
\[ \begin{cases}
R_{\doub(X)}(t/2)+R_{\n(X')}(t/2) & \text{if $t$ is even,} \\
R_{\doub(X),\n(X')}\bigl(\frac{t-1}{2}\bigr) + R_{\n(X'),\doub(X)}\bigl(\frac{t+1}{2}\bigr) & \text{if $t$ is odd} .
\end{cases} \]
By property 3 we have $R_{X_\new}(t)=0$ when $t$ is odd.
When $t$ is even by properties 1 and~2 we have
\[ R_{X_\new}(t) = 2R_X(t/2) + 2\R_{X'}(t/2) . \]
When $t=2n$ we have $R_X(t/2)=n$ and $\R_{X'}(t/2)=-n$ (because $X$ and $X'$ both have length $n$) so $R_{X_\new}(t)=0$ in this case.  Otherwise
we have
\[ \sum_{X=A,B,C,D}R_{X}(t/2) = 0 \quad\text{and}\quad \sum_{X=A,B,C,D}\R_{X'}(t/2) = 0 \]
for even $t\neq2n$ with $1\leq t<4n$
since $(A,B,C,D)$ are Williamson sequences and $(A',B',C',D')$ are nega Williamson sequences.
It follows that $\sum_{X=A,B,C,D}R_{X_\new}(t)=0$, as required.
\end{proof}

\begin{example}
Using the set of Williamson sequences $(\verb|++-+|,\verb|++-+|,\verb|++-+|,\verb|++-+|)$
and the set of antipalindromic nega Williamson sequences $(\verb|+-+-|,\verb|+-+-|,\verb|++--|,\verb|++--|)$
in Theorem~\ref{thm:main} produces the set of Williamson sequences
\begin{center} $\mathllap{(}\verb|+++--++-+-++--++|,\verb|+++--++-+-++--++|\mathrlap{,}$ \\
$\verb|++++--+-+-+--+++|,\verb|++++--+-+-+--+++|\mathrlap{).}$ \end{center}
\end{example}

Note that the assumption that $n$ is even is essential to the theorem.  If $n$ is odd
the constructed sequences will not be symmetric.
Additionally, antipalindromic nega Williamson sequences
do not exist in odd lengths $n>1$, as we now show.

\begin{lemma}
Antipalindromic nega Williamson sequences do not exist in odd lengths except for $n=1$.
\end{lemma}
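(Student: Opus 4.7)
The plan is to apply Lemma~\ref{lem:nega} to reduce the negacomplementary condition to a periodic complementary one, and then derive a numerical contradiction from the standard rowsum identity for periodic complementary quadruples of $\{\pm1\}$-sequences of length $n$, namely $\sum_X \rowsum(X)^2 = 4n$.

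First I would suppose for contradiction that $(A,B,C,D)$ are antipalindromic nega Williamson sequences of odd length $n>1$. By Lemma~\ref{lem:nega}, the quadruple $((-1)*A,(-1)*B,(-1)*C,(-1)*D)$ is then periodic complementary. The rowsum identity itself needs only a one-line derivation: for any $\{\pm1\}$-sequence $X$ of length $n$ one has $\sum_{t=0}^{n-1}R_X(t)=\rowsum(X)^2$ (just swap the order of summation in $\sum_t\sum_r x_rx_{r+t\bmod n}$), so summing over $X\in\{A,B,C,D\}$ and using periodic complementarity to kill every term with $t\geq 1$ while $R_X(0)=n$ contributes $4n$ yields
\[ \sum_{X\in\{A,B,C,D\}}\rowsum((-1)*X)^2 = 4n. \]

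The key observation is that for an antipalindromic $\{\pm1\}$-sequence $X$ of odd length $n$, the entries of $(-1)*X$ pair up under $t\leftrightarrow n-1-t$ and cancel. Since $n-1$ is even we have $(-1)^{n-1-t}=(-1)^t$, so combined with $x_{n-1-t}=-x_t$ we get $(-1)^tx_t+(-1)^{n-1-t}x_{n-1-t}=0$ for every $t<(n-1)/2$. The only surviving term is the unpaired middle entry $(-1)^{(n-1)/2}x_{(n-1)/2}=\pm1$, so $\rowsum((-1)*X)^2 = 1$ for each $X\in\{A,B,C,D\}$.

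Substituting back into the rowsum identity gives $4=4n$, forcing $n=1$ and contradicting $n>1$. I do not expect any real obstacle here; the only thing to watch is the parity bookkeeping that makes $(-1)^{n-1-t}=(-1)^t$ work, which is exactly the feature of odd~$n$ that drives the cancellation.
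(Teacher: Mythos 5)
Your proposal is correct and follows essentially the same route as the paper: apply Lemma~\ref{lem:nega} to pass to a periodic complementary quadruple, invoke the rowsum identity $\sum_X \rowsum(X)^2 = 4n$, and observe that each alternated sequence has rowsum $\pm1$, forcing $n=1$. The only cosmetic differences are that you derive the rowsum identity inline rather than citing it as well known, and you compute $\rowsum((-1)*X)=\pm1$ directly instead of first noting that $(-1)*X$ is itself antipalindromic of odd length.
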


\begin{proof}
Let $(A,B,C,D)$ be a hypothetical set of antipalindromic nega Williamson sequences of odd length~$n$.
Note that a sequence $X$ of odd length $n$ is antipalindromic if and only if $X'\coloneqq(-1)*X$ is antipalindromic.
By Lemma~\ref{lem:nega}, it follows that
$(A',B',C',D')$ are antipalindromic periodic complementary sequences.
If $\rowsum(X)$ denotes the row sum of $X$ it is well-known
(e.g., via the Wiener--Khinchin theorem or sequence compression~\cite{dokovic2015compression})
that
\[ \rowsum(A')^2 + \rowsum(B')^2 + \rowsum(C')^2 + \rowsum(D')^2 = 4n . \]
However, if $X'$ is antipalindromic we have $\rowsum(X')=\pm1$ and the above sum of squared row sums
must be equal to four, implying that $n=1$.
\end{proof}

Although the construction in Theorem~\ref{thm:main} requires
sequences of even lengths there is a variant of the construction
that uses sequences of odd lengths.
The proof is similar and uses the following additional properties.
\begin{enumerate}
\item[9)] If $X$ is antisymmetric and of odd length then $\n(X)$ is symmetric.
\item[10)] If $X$ is palindromic then $\doub(X)$ is palindromic.
\end{enumerate}

\begin{theorem}\label{thm:altmain}
Let $n$ be odd and let $(A,B,C,D)$ and $(A',B',C',D')$ be quadruples of sequences of length~$n$.
If $(A,B,C,D)$ is the result of applying $(n-1)/2$ cyclic shifts to each member in a set of Williamson sequences
and $(A',B',C',D')$ is the result of applying $(n+1)/2$ negacyclic shifts to each member in a
set of palindromic nega Williamson sequences then
\[ (\n(A')\shuffle\doub(A),\n(B')\shuffle\doub(B),\n(C')\shuffle\doub(C),\n(D')\shuffle\doub(D)) \]
are Williamson sequences of length~$4n$.
\end{theorem}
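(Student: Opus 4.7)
The plan is to follow the proof of Theorem~\ref{thm:main} with only the changes forced by switching from even $n$ to odd $n$. The three things to verify are that each constructed sequence is a $\{\pm1\}$-sequence of length $4n$, that each is symmetric, and that $\sum_X R_{X_\new}(t)=0$ for $1\leq t<4n$. The first is immediate from the definitions of $\doub$, $\n$, and $\shuffle$, so the work lies in the other two.

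For symmetry, I would first check that the prescribed shifts put the inputs into the form needed by properties 9 and 10. A short index calculation using $a_t=a_{n-t}$ shows that $(n-1)/2$ cyclic shifts of a symmetric sequence of odd length $n$ produce a palindromic sequence: the ``axis'' at position $0$ moves to the middle position $(n-1)/2$. A similar calculation, this time using the negacyclic identification $a'_{r+n}=-a'_r$ together with $a'_t=a'_{n-1-t}$, shows that $(n+1)/2$ negacyclic shifts of a palindromic sequence of odd length $n$ produce an antisymmetric sequence. Hence in the hypothesis each of $A,B,C,D$ is palindromic and each of $A',B',C',D'$ is antisymmetric. Property~10 then gives $\doub(A)$ palindromic, property~9 gives $\n(A')$ symmetric, and property~8 applied to $\n(A')\shuffle\doub(A)$ yields a symmetric sequence of length $4n$.

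For complementarity, I would repeat the argument in the proof of Theorem~\ref{thm:main} with the two arguments of $\shuffle$ interchanged. Splitting $R_{X_\new}(t)$ by the parity of $t$ via properties 4 and 5 reduces odd $t$ to property~3 (which gives $0$ in both orderings of $\doub(\cdot)$ and $\n(\cdot)$) and reduces even $t=2s$ to
\[ R_{X_\new}(t)=2\R_{X'}(s)+2R_X(s) \]
via properties 1 and 2. The boundary case $s=n$ gives $2(-n)+2(n)=0$ individually for each $X$. For the remaining $s\in\{1,\dotsc,2n-1\}\setminus\{n\}$, the two sums $\sum_X R_X(s)$ and $\sum_X \R_{X'}(s)$ vanish because cyclic shifts preserve periodic autocorrelations and negacyclic shifts preserve negaperiodic autocorrelations, so the shifted quadruples remain respectively periodic complementary and negacomplementary, and the identities $R_X(s+n)=R_X(s)$ and $\R_{X'}(s+n)=-\R_{X'}(s)$ propagate the vanishing from $1\leq s<n$ to the full range.

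I expect the main obstacle to be the index bookkeeping in the symmetry step: the shifts $(n-1)/2$ and $(n+1)/2$ are dictated by the need to move the fixed points of the (nega)periodic symmetries into the positions required by properties 9 and 10, and this requires careful tracking of the negacyclic wraparound. Once these two shift claims are established the rest is a direct re-run of the proof of Theorem~\ref{thm:main}, with properties 6 and 7 replaced by 9 and 10 and with the order of the two arguments of $\shuffle$ swapped.
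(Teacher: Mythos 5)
Your proposal is correct and follows essentially the same route as the paper: establish via the prescribed shifts that $(A,B,C,D)$ become palindromic and $(A',B',C',D')$ antisymmetric, apply properties 9, 10, and 8 for symmetry, and then rerun the correlation computation of Theorem~\ref{thm:main} with the roles of the two $\shuffle$ arguments swapped. Your explicit remark that cyclic and negacyclic shifts preserve the periodic-complementary and negacomplementary properties is exactly the fact the paper relies on (it is stated in its preliminaries), so nothing is missing.
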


\begin{proof}
Since a symmetric sequence of odd length $n$ becomes palindromic after $(n-1)/2$ cyclic shifts,
$(A,B,C,D)$ are palindromic.  Since a palindromic sequence of odd length~$n$ becomes antisymmetric
after $(n+1)/2$ negacyclic shifts, $(A',B',C',D')$ are antisymmetric.

Let $X_\new\coloneqq\n(X')\shuffle\doub(X)$ for $X\in\{A,B,C,D\}$.
By construction it is clear that $X_\new$ is a $\{\pm1\}$-sequence of length~$4n$.
Additionally, $X_\new$ is symmetric by property 8 since $\n(X')$ is symmetric by property 9
and $\doub(X)$ is palindromic by property 10.
The remainder of the proof now proceeds as in the proof of Theorem~\ref{thm:main}.
\end{proof}

\begin{example}
Note $(\verb|++--+|,\verb|-+--+|,\verb|-++++|,\verb|-++++|)$
are Williamson sequences and $(\verb|+---+|,\verb|++-++|,\verb|+---+|,\verb|+++++|)$
are palindromic nega Williamson sequences.
Then using $(A,B,C,D)=(\verb|-+++-|,\verb|-+-+-|,\verb|++-++|,\verb|++-++|)$
and $(A',B',C',D')=(\verb|++-+-|,\verb|+--++|,\verb|++-+-|,\verb|---++|)$
in Theorem~\ref{thm:altmain} produces the following Williamson sequences of length~$20$:
\begin{center} $\mathllap{(}\verb|+-++-+++-----+++-++-|,\verb|+--+--+++---+++--+--|\mathrlap{,}$ \\
$\verb|++++--++-+-+-++--+++|,\verb|-+-+--+++++++++--+-+|\mathrlap{).}$ \end{center}
\end{example}

\section{Nega Williamson sequences and odd perfect sequences}\label{sec:nega}

Although antipalindromic nega Williamson sequences do not exist
in odd lengths larger than~$1$ we now present constructions
showing that they exist in many even lengths.  First, note that
when the length is even there is a correspondence between palindromic and antipalindromic nega Williamson sequences.
In other words, to use Theorem~\ref{thm:main} it is sufficient to find palindromic nega Williamson sequences.

\begin{lemma}\label{lem:pal}
There is a one-to-one correspondence between
palindromic and antipalindromic nega Williamson sequences in even lengths.
\end{lemma}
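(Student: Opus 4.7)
The plan is to exhibit an explicit involution between the two classes, namely the alternating negation operator $(-1)*X$ whose $r$th entry is $(-1)^r x_r$, and verify that (i) when $n$ is even it swaps palindromic with antipalindromic, and (ii) it preserves the nega Williamson property. Because $(-1)*((-1)*X)=X$, the map is its own inverse, so this immediately gives a bijection once both preservation properties are established.

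First I would check the symmetry swap. If $X$ is palindromic of even length $n$, write $Y\coloneqq(-1)*X$ so that $y_t=(-1)^t x_t$ and $y_{n-t-1}=(-1)^{n-t-1}x_{n-t-1}=(-1)^{n-t-1}x_t$. The ratio $y_t/y_{n-t-1}$ equals $(-1)^{t-(n-t-1)}=(-1)^{2t-n+1}$, which is $-1$ exactly because $n$ is even. Hence $Y$ is antipalindromic, and the same computation run backwards shows that $(-1)*$ sends antipalindromic sequences of even length back to palindromic ones.

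Next I would show that alternating negation commutes (up to an overall sign depending on $t$) with the negaperiodic autocorrelation, in the sense that
\[ \R_{(-1)*X}(t)=(-1)^t\R_X(t) \quad\text{for all $t$, when $n$ is even.} \]
The verification is a direct manipulation of the definition $\R_X(t)=\sum_{r=0}^{n-1}(-1)^{\floor{(r+t)/n}}x_r x^*_{r+t\bmod n}$: writing $r+t=qn+s$ with $s=r+t\bmod n$ and $q=\floor{(r+t)/n}$, the extra factor introduced by alternating negation is $(-1)^{r+s}=(-1)^{r+r+t-qn}=(-1)^{t}(-1)^{qn}$, and the evenness of $n$ kills the $(-1)^{qn}$ factor. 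Summing this identity over the four sequences, the common factor $(-1)^t$ preserves the vanishing of $\sum_X \R_X(t)$ for $1\le t<n$, so $((-1)*A,(-1)*B,(-1)*C,(-1)*D)$ remains a nega Williamson quadruple.

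The only real obstacle is the bookkeeping in the second step; everything else is formal. Combining the two preservation properties, alternating negation maps palindromic nega Williamson sequences of even length bijectively to antipalindromic ones and vice versa, establishing the claimed one-to-one correspondence.
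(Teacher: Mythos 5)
Your proof is correct, but it uses a genuinely different bijection from the paper's. The paper applies $n/2$ negacyclic shifts to each sequence: since the negaperiodic autocorrelation values are invariant under negacyclic shifts (as noted in the preliminaries), the nega Williamson property is preserved for free, and a short index check shows that $n/2$ such shifts turn a palindrome of even length into an antipalindrome and vice versa. You instead use alternating negation $(-1)*X$, which requires you to prove the sign-twisted invariance $\R_{(-1)*X}(t)=(-1)^t\R_X(t)$ for even $n$; your computation of the factor $(-1)^{r+(r+t-qn)}=(-1)^t(-1)^{qn}$ is right, the evenness of $n$ is used exactly where it must be (both there and in the palindromic/antipalindromic swap), and the involution property gives the bijection. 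The two maps are genuinely different (on $[\,\texttt{+--++-}\,]$ the paper's map gives $[\,\texttt{--++--}\,]$ while yours gives $[\,\texttt{++--++}\,]$), but both are involutions exchanging the two classes. The paper's route is slightly more economical because it leans on an already-stated invariance; yours has the side benefit of extending Lemma~\ref{lem:nega}'s alternating-negation philosophy to even lengths, at the cost of one extra correlation identity to verify.
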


\begin{proof}
Applying $n/2$ negacyclic shifts to each sequence in a set
of palindromic nega Williamson sequences of even length~$n$
yields a set of antipalindromic nega Williamson sequences.  Similarly, a set of
palindromic nega Williamson sequences can be produced from a set
of antipalindromic nega Williamson sequences by applying the negacyclic shift operator
$n/2$ times.
\end{proof}

\begin{example}
$(\verb|+--++-|,\verb|+-+-+-|,\verb|++-+--|,\verb|+++---|)$ are antipalindromic
nega Williamson sequences
that may be converted into
the palindromic nega Williamson sequences
$(\verb|--++--|,\verb|+-++-+|,\verb|-++++-|,\verb|++++++|)$ using the translation
in Lemma~\ref{lem:pal}.
\end{example}

We now provide a construction that shows that infinitely many
palindromic nega Williamson sequences exist.
Recall that $\tilde X$ denotes the reverse of the sequence $X$ and $[X;Y]$ denotes the concatenation
of~$X$ and~$Y$.
Note that if~$X$ and~$Y$ have length $n$ then $C_{[X;Y]}(t)=C_X(t)+C_Y(t)+C_{Y,X}(n-t)^*$
and $C_{[X;Y]}(2n-t)=C_{X,Y}(n-t)$ for $0\leq t\leq n$.
First we prove a simple lemma.
\begin{lemma}\label{lem:negadoub}
Applying the negadoubling operator to
the sequences in a complementary set produces a negacomplementary set.
\end{lemma}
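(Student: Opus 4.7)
The plan is to apply the concatenation formulas for aperiodic autocorrelation stated just before the lemma. Writing $B\coloneqq \n(A) = [A; -A]$, one takes $X=A$ and $Y=-A$ in the identities $C_{[X;Y]}(t)=C_X(t)+C_Y(t)+C_{Y,X}(n-t)^*$ and $C_{[X;Y]}(2n-t)=C_{X,Y}(n-t)$, using $C_{-A}(t)=C_A(t)$, $C_{-A,A}(t) = -C_A(t)$, and $C_{A,-A}(t)=-C_A(t)$. This gives
\[ C_B(t) = 2C_A(t) - C_A(n-t)^* \quad\text{for $0\leq t\leq n$} \]
and $C_B(2n-t)=-C_A(n-t)$ for $0\leq t\leq n$. (Consistency at $t=n$ uses that $C_A(0)$ is a nonnegative real.)

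The second step is to feed these into the definition $\R_B(t)=C_B(t)-C_B(2n-t)^*$. For $1\leq t<n$ the two $C_A(n-t)^*$ terms cancel and one obtains $\R_B(t)=2C_A(t)$. For $n<t<2n$, substituting $s\coloneqq 2n-t\in(0,n)$ and using both formulas above yields $\R_B(t)=-2C_A(2n-t)^*$. At $t=n$ the expression collapses to $0$ since $C_A(n)=0$ and $C_A(0)$ is real.

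The final step is to sum over the sequences $A$ in the complementary set $S$. By the hypothesis $\sum_{A\in S}C_A(u)=0$ for every $1\leq u<n$, so the expressions derived above give
\[ \sum_{A\in S}\R_{\n(A)}(t) = 0 \quad\text{for $1\leq t<2n$}, \]
which is exactly the negacomplementary condition for sequences of length $2n$. Thus $\{\n(A):A\in S\}$ is negacomplementary, as claimed.

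The only real obstacle is bookkeeping: one has to be careful about the three ranges $1\leq t<n$, $t=n$, $n<t<2n$ and about the conjugates in $\R_B(t)=C_B(t)-C_B(2n-t)^*$, but none of this uses more than the concatenation identities quoted in the paper and the elementary observation that $C_A(0)$ is real.
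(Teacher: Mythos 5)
Your proof is correct and follows essentially the same route as the paper: both derive $\R_{\n(A)}(t)=2C_A(t)$ for $0\leq t\leq n$ from the concatenation identities and then sum over the complementary set. The only cosmetic difference is that the paper handles the range $n<t<2n$ by invoking the symmetry $\R_X(t)=\R_X(-t)^*$, whereas you compute $\R_{\n(A)}(t)=-2C_A(2n-t)^*$ directly; both are fine.
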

\begin{proof}
Note that if $X$ is a sequence of length~$n$ then $\R_{\n(X)}(t)=2C_X(t)$ for $0\leq t\leq n$;
by definition we have $\R_{\n(X)}(t)=C_{\n(X)}(t)-C_{\n(X)}(2n-t)^*$ and when $0\leq t\leq n$ we
have $C_{\n(X)}(t)=C_X(t)+C_{-X}(t)+C_{-X,X}(n-t)^*=2C_X(t)-C_X(n-t)^*$ and $C_{\n(X)}(2n-t)=C_{X,-X}(n-t)=-C_{X}(n-t)$.
Thus $\R_{\n(X)}=2C_X(t)-C_X(n-t)^*+C_{X}(n-t)^*=2C_X(t)$.

Suppose $S$ is a complementary set of sequences of length~$n$.
Using the above property we have
$\sum_{A\in S}\R_{\n(A)}(t)=2\sum_{A\in S}C_A(t)$ for all $0\leq t\leq n$.
Since $S$ is complementary this implies $\sum_{A\in S}\R_{\n(A)}(t)=0$ for all $1\leq t\leq n$.
Using the symmetry $\R_X(t)=\R_X(-t)^*$ shows that this also holds for all $n<t<2n$.
\end{proof}

\begin{example}
Using the set of four complementary sequences $(\verb|+++|,\verb|+--|, \verb|+-+|, \verb|++-|)$
of length three with Lemma~\ref{lem:negadoub} produces the set of four negacomplementary sequences
$(\verb|+++---|, \verb|+---++|, \verb|+-+-+-|, \verb|++---+|)$ of length six.
\end{example}

\begin{theorem}\label{thm:negcon}
If $A$ and $B$ are complementary $\{\pm1\}$-sequences (i.e., $(A,B)$ is a Golay pair) then
\[ ([A;B;\tilde B;\tilde A], [\tilde B;\tilde A;A;B], [-\tilde B;\tilde A;A;-B], [-A;B;\tilde B;-\tilde A]) \]
and
\begin{gather*}
([A;B;\tilde B;\tilde A;A;B;\tilde B;\tilde A], \\
[A;B;-\tilde B;-\tilde A;-A;-B;\tilde B;\tilde A], \\
[A;-B;\tilde B;-\tilde A;-A;B;-\tilde B;\tilde A], \\
[A;-B;-\tilde B;\tilde A;A;-B;-\tilde B;\tilde A])
\end{gather*}
are sets of palindromic nega Williamson sequences with the $Q_8$-property.
\end{theorem}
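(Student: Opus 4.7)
The proof consists of three tasks for each of the two quadruples. Palindromicity is immediate from the block structure, since the reverse of $[X_1;\dots;X_k]$ is $[\tilde X_k;\dots;\tilde X_1]$ and each listed block pattern is fixed under that operation. The $Q_8$-property is equally routine: at any position $r$ the four entries multiply to a product of squared $\pm1$ values, hence always $1$. For example, in the first block of the length-$4n$ construction the entries at position $r$ are $a_r$, $b_{n-1-r}$, $-b_{n-1-r}$, $-a_r$, whose product is $a_r^2\,b_{n-1-r}^2=1$; the remaining blocks (and the eight blocks of the length-$8n$ construction) are verified identically.

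For the length-$8n$ quadruple, my plan is to use the factorizations $T_1=\doub(U_1)$, $T_4=\doub(U_4)$, $T_2=\n(U_2)$, $T_3=\n(U_3)$, where $U_1,\dots,U_4$ are the length-$4n$ sequences of shape $[A;\pm B;\pm\tilde B;\pm\tilde A]$ whose sign patterns form the rows of a $4\times4$ Hadamard matrix $H$. A short computation analogous to Lemma~\ref{lem:negadoub} gives $\R_{\doub(X)}(t)=\R_{\n(X)}(t)=2C_X(t)$ for $0\le t\le|X|$, and the negaperiodic antisymmetry $\R_S(2N-t)=-\R_S(t)$ extends this to the full range, reducing $\sum_i\R_{T_i}(t)=0$ to the aperiodic identity $\sum_i C_{U_i}(t)=0$ for $1\le t\le 4n-1$. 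Introducing $\alpha,\beta$ for the polynomials of $A,B$ and the reverse polynomial $\alpha^\vee(x)\coloneqq x^{n-1}\alpha(x^{-1})$, I write $(U_1,\dots,U_4)^T=Hv$ with $v=(\alpha,y\beta,y^2\beta^\vee,y^3\alpha^\vee)^T$ and $y=x^n$. Since $H^TH=4I$, this yields $\sum_iU_i(x)U_i(x^{-1})=4\sum_jv_jv_j(x^{-1})$, and the identities $\alpha^\vee(x)\alpha^\vee(x^{-1})=\alpha(x)\alpha(x^{-1})$ (and the analogue for $\beta$) collapse the result to $8(\alpha\alpha(x^{-1})+\beta\beta(x^{-1}))=16n$ by the Golay identity $C_A(t)+C_B(t)=2n\,\delta_{t,0}$.

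The length-$4n$ quadruple is the main obstacle, because the $S_i$ do not factor through $\doub$ or $\n$, so I would work directly in the quotient $\mathbf R[x]/(x^{4n}+1)$. Palindromicity there forces $S_i(x^{-1})=-x\,S_i(x)$, so it suffices to show $\sum_iS_i(x)^2=16n\,x^{4n-1}$. The pairs $\{S_1,S_4\}$ and $\{S_2,S_3\}$ each have the shape $\{U+V,U-V\}$, whose squared sum is $2(U^2+V^2)$, and expanding in $y=x^n$ with $y^4=-1$ (hence $y^6=-y^2$) causes the contributions from $\alpha^2,\beta^2,(\alpha^\vee)^2,(\beta^\vee)^2$ to cancel between antipodal $y$-degrees, leaving only $8y^3(\alpha\alpha^\vee+\beta\beta^\vee)$. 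The Golay identity again yields $\alpha\alpha^\vee+\beta\beta^\vee=2n\,x^{n-1}$ in the quotient, completing the computation. The delicate part is verifying these cancellations: they rely simultaneously on the palindromic identity $S_i(x^{-1})=-xS_i(x)$ and on the antipodal symmetry $y^4=-1$, neither of which is available in the cleaner Hadamard argument used for the length-$8n$ case.
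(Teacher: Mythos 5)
Your proposal is correct, but it reaches the conclusion by a noticeably different route than the paper. The paper's proof is structural: it observes that $(\tilde A,\tilde B)$ is again a Golay pair, cites Tseng--Liu to conclude that the four sequences $[A;\pm B;\pm\tilde B;\pm\tilde A]$ with Hadamard sign patterns (your $U_1,\dotsc,U_4$) form a complementary set, notes that complementary implies negacomplementary, and then obtains \emph{both} target quadruples from this single set using operations that preserve negacomplementarity---the length-$4n$ quadruple is exactly $(U_1,\dotsc,U_4)$ after $2n$ negacyclic shifts of the second and third members and a negation of the fourth, and the length-$8n$ quadruple is the negadoubling of the set (Lemma~\ref{lem:negadoub}) followed by $4n$ negacyclic shifts of the first and last members. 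You instead reprove the Tseng--Liu step from scratch via the $H^TH=4I$ polynomial identity, handle the $8n$ case through the $\doub/\n$ factorization together with $\R_{\doub(X)}(t)=\R_{\n(X)}(t)=2C_X(t)$ (both facts check out; the $\doub$ version is not in the paper but follows by the same computation as the lemma), and then treat the $4n$ case as a separate, harder problem requiring a direct computation in $\mathbf{R}[x]/(x^{4n}+1)$. That computation is correct---the cancellation down to $8y^3(\alpha\alpha^\vee+\beta\beta^\vee)=16nx^{4n-1}$ goes through exactly as you describe---but the "main obstacle" you identify dissolves once one notices that $S_2,S_3$ are negacyclic shifts of $U_2,U_3$ and $S_4=-U_4$, since shift and negation invariance of negacomplementarity is already established in the preliminaries. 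What your approach buys is self-containedness (no external citation) and a uniform quotient-ring framework; what it costs is length and the insight into why these particular sign patterns and block orderings arise, namely that all eight sequences are shifts, negations, or negadoublings of the one complementary set~\eqref{eq:star}.
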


\begin{proof}
The fact that the sequences are palindromic is immediate from the manner in which they were constructed.
Note that $(\tilde A,\tilde B)$ is a Golay pair since $(A,B)$ is a Golay pair.  Thus $(A,B,\tilde B,\tilde A)$
is a complementary set and since
\[ \begin{bmatrix}
\verb|+| & \verb|+| & \verb|+| & \verb|+| \\
\verb|+| & \verb|+| & \verb|-| & \verb|-| \\
\verb|+| & \verb|-| & \verb|+| & \verb|-| \\
\verb|+| & \verb|-| & \verb|-| & \verb|+|
\end{bmatrix} \] is an orthogonal matrix the sequences
\[ 
\begin{gathered}([A;B;\tilde B;\tilde A], [A;B;-\tilde B;-\tilde A],\\
[A;-B;\tilde B;-\tilde A], [A;-B;-\tilde B;\tilde A])\end{gathered} \tag{$*$}\label{eq:star} \]
form a complementary set of sequences by~\cite[Thm.~7]{tseng1972complementary}.
Since they are complementary they are also negacomplementary
and applying $2n$ negacyclic shifts (where $A$ and $B$ are of length~$n$)
to the second and third sequences and negating the fourth shows the first set in the theorem is negacomplementary.

Since~\eqref{eq:star} are complementary, by Lemma~\ref{lem:negadoub}
applying the negadoubling operator to these sequences will produce negacomplementary sequences.
In other words,
\begin{gather*}
([A;B;\tilde B;\tilde A;-A;-B;-\tilde B;-\tilde A], \\
[A;B;-\tilde B;-\tilde A;-A;-B;\tilde B;\tilde A], \\
[A;-B;\tilde B;-\tilde A;-A;B;-\tilde B;\tilde A], \\
[A;-B;-\tilde B;\tilde A;-A;B;\tilde B;-\tilde A])
\end{gather*}
are negacomplementary.  Applying $4n$ negacyclic shifts (where $A$ and $B$ are of length~$n$) to the
first and last sequences shows the second set in the theorem is negacomplementary.

Lastly, we show that the produced sequences $X$, $Y$, $U$, and~$V$ have the $Q_8$-property.
In the first set we have $x_r=a_r$, $y_r=b_{n-r+1}$, $u_r=-y_r$, and $v_r=-x_r$ for all $0\leq r<n$
and in the second set we have $x_r=y_r=u_r=v_r=a_r$ for all $0\leq r<n$.
In each case $x_ry_ru_rv_r=1$ for all $0\leq r<n$ and more generally we have
$x_ry_ru_rv_r=1$ for all $0\leq r<4n$ in the first set (and $0\leq r<8n$ in the second set).
\end{proof}

\begin{example}
Using the Golay pair $(\verb|++|,\verb|+-|)$
the first set of palindromic nega Williamson sequences
generated by Theorem~\ref{thm:negcon} is $(\verb|+++--+++|,\verb|-++++++-|,\verb|+-++++-+|,\verb|--+--+--|)$;
the second set is
$(\verb|+++--++++++--+++|,\verb|+++-+------+-+++|\c\verb|++-+-+----+-+-++|,\verb|++-++-++++-++-++|)$.
\end{example}

Golay sequences, originally defined by Golay~\cite{golay1961complementary},
are known to exist in lengths $2$, $10$, and $26$~\cite{davis1999peak}.
Since Turyn has shown that Golay pairs in lengths $n$ and $m$ can be composed to
form Golay pairs in length $nm$~\cite{turyn1974hadamard} they also exist 
in all lengths of the form $2^a5^b13^c$ with $a\geq b+c$.
Thus, Theorem~\ref{thm:negcon} implies that palindromic nega Williamson sequences with the $Q_8$-property
exist in all lengths of the form $2^a5^b13^c$ with $a\geq b+c+2$.
In particular, taking $b=c=0$ gives that
palindromic nega Williamson sequences with the $Q_8$-property
exist in all lengths that are powers of two (since palindromic nega Williamson sequences with the $Q_8$-property
of length~$2$ exist, see Example~\ref{ex:simple}).
These are not the only lengths in which palindromic nega Williamson sequences exist, however.
Palindromic nega Williamson sequences in many other lengths may be constructed using
L\"uke's product construction with odd perfect sequences.

First, note that in odd lengths there is an equivalence between Williamson sequences
and palindromic nega Williamson sequences.

\begin{lemma}\label{lem:negacor}
There is a one-to-one correspondence between
palindromic nega Williamson sequences and Williamson sequences in odd lengths.
\end{lemma}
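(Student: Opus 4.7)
The plan is to exhibit an explicit bijection by combining cyclic shifting with the alternating negation operator from Lemma~\ref{lem:nega}. Given a Williamson quadruple $(A,B,C,D)$ of odd length~$n$, I would first apply $(n-1)/2$ cyclic shifts to each member and then apply alternating negation to each member, obtaining the candidate palindromic nega Williamson quadruple. The inverse map would apply alternating negation first and then $(n+1)/2$ cyclic shifts to each member.

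Three invariants need to be checked. First, cyclic shifting preserves all periodic autocorrelation values and hence the periodic complementary property, as already noted in the preliminaries. Second, as recorded in the proof of Theorem~\ref{thm:altmain}, $(n-1)/2$ cyclic shifts carry a symmetric odd-length sequence to a palindromic one, and dually $(n+1)/2$ cyclic shifts carry a palindromic odd-length sequence to a symmetric one; both statements reduce to a one-line index chase using $2\cdot(n\pm1)/2\equiv\pm1\pmod n$. Third, alternating negation swaps periodic complementarity with negacomplementarity by Lemma~\ref{lem:nega}, and it preserves palindromicity in odd lengths because $r$ and $n-1-r$ have the same parity when $n-1$ is even, so $(-1)^r x_r=(-1)^{n-1-r}x_{n-1-r}$ whenever $x_r=x_{n-1-r}$.

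Combining these invariants shows that the forward map lands in palindromic nega Williamson sequences and the backward map lands in Williamson sequences. Inverseness is then immediate: alternating negation is an involution, and $(n-1)/2+(n+1)/2=n$ cyclic shifts act as the identity on any length-$n$ sequence, so each composition of the two maps is the identity. No step of the argument is deep; the main care needed is the parity bookkeeping in the second and third invariants above.
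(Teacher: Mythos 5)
Your proof is correct and is essentially the paper's argument with the two operations performed in the opposite order: the paper first applies alternating negation (obtaining antisymmetric negacomplementary sequences) and then $(n-1)/2$ negacyclic shifts, whereas you first apply $(n-1)/2$ cyclic shifts (obtaining palindromic periodic complementary sequences) and then alternating negation, and the two composite maps agree up to a global sign. All the invariants you check---the shift/symmetry bookkeeping, Lemma~\ref{lem:nega}, and the observation that alternating negation preserves palindromicity in odd lengths---are valid, so your bijection goes through.
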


\begin{proof}
Let $(A,B,C,D)$ be a set of Williamson sequences of odd length~$n$.
By Lemma~\ref{lem:nega}, $((-1)*A,(-1)*B,(-1)*C,(-1)*D)$ will be
a set of nega Williamson sequences.  The sequences in this set will be antisymmetric
since if $X$ is symmetric and of odd length then $(-1)*X$ is antisymmetric.
Applying $(n-1)/2$ negacyclic shifts to each sequence 
in this set produces a set of palindromic nega Williamson sequences.
Similarly, arbitrary palindromic nega Williamson sequences of odd length
can be transformed into Williamson sequences by applying the inverse of the above transformations.
\end{proof}

\begin{example}
The set of Williamson sequences
$(\verb|-++--++|\c\verb|---++--|\c\verb|-+----+|\c\verb|-+----+|)$
generates the set of palindromic nega Williamson sequences
$(\verb|++---++|\c\verb|--+-+--|\c\verb|+-----+|\c\verb|+-----+|)$ and vice versa
using the transformation in Lemma~\ref{lem:negacor}.
\end{example}

Since Williamson sequences are known to exist for all lengths $n<35$
this implies that palindromic nega Williamson sequences exist for all
odd lengths up to~$33$.

Next, we note that odd perfect sequences in even lengths can be
composed with odd perfect sequences in odd lengths to generate
longer odd perfect sequences.
Let $X\hattimes Y$ be the sequence whose $r$th entry is $(-1)^{\floor{r/n}+\floor{r/m}}x_{r\bmod n}y_{r\bmod m}$
for $0\leq r<nm$ (where~$X$ has length~$n$ and~$Y$ has length~$m$).

\begin{lemma}\label{lem:oddprod}
Suppose $X$ and $Y$ have coprime lengths $n$ and~$m$, one of which is even.
If $X$ is odd perfect and $Y$ is odd perfect then $X\hattimes Y$
is odd perfect.  Furthermore, if $X$ and $Y$ are palindromic then
$X\hattimes Y$ is antipalindromic.
\end{lemma}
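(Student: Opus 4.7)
The natural approach is to work with negaperiodic extensions. Define $\bar x_r := (-1)^{\lfloor r/n\rfloor} x_{r\bmod n}$ and $\bar y_r := (-1)^{\lfloor r/m\rfloor} y_{r\bmod m}$, so that the definition of $\hattimes$ yields $z_r = \bar x_r \bar y_r$ for $0\le r<nm$. The crucial first step is to check that this identity extends to all $r\ge 0$, i.e., that the negaperiodic extension $\bar z$ of $Z := X\hattimes Y$ equals $\bar x\,\bar y$ pointwise. Writing $r = qnm + s$ with $0\le s<nm$, the left-hand side picks up a factor $(-1)^q$ while the right picks up $(-1)^{q(n+m)}$, so the identity holds iff $n+m$ is odd --- which, given $\gcd(n,m)=1$, is precisely the hypothesis that one of $n,m$ is even.

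With this identity in hand, odd perfectness reduces to a Chinese Remainder Theorem factorisation. From $\R_Z(t) = \sum_{r=0}^{nm-1} z_r \bar z^*_{r+t}$ the summand becomes $(\bar x_r \bar x^*_{r+t})(\bar y_r \bar y^*_{r+t})$, whose first factor has period $n$ in $r$ and second has period $m$. Since $\gcd(n,m)=1$, the CRT bijection $r \leftrightarrow (r\bmod n,\, r\bmod m)$ lets the sum over $0\le r<nm$ factor as $\R_X(t)\,\R_Y(t)$, the two inner sums being precisely $\R_X(t)$ and $\R_Y(t)$ by the extended definition of $\R$. The relation $\R_X(t+n) = -\R_X(t)$ together with odd perfectness of $X$ forces $\R_X(t)$ to vanish unless $n\mid t$, and likewise $\R_Y(t)=0$ unless $m\mid t$; thus $\R_Z(t)=0$ for every $1\le t<nm$, since no such $t$ is divisible by $nm$.

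For the palindromic assertion I would compute $z_{nm-r-1}$ directly. Writing $r = pn + s$ with $0\le s<n$ gives $(nm-r-1)\bmod n = n-1-s$ and $\lfloor(nm-r-1)/n\rfloor = m-1-p$, and the analogous identities hold with the roles of $n$ and $m$ swapped. Palindromicity of $X$ and $Y$ then cancels the reflected indices, while the sign exponents contribute $(-1)^{(n-1)+(m-1)} = (-1)^{n+m} = -1$ by the parity hypothesis. Hence $z_{nm-r-1} = -z_r$, so $Z$ is antipalindromic. The main obstacle is that both halves of the proof hinge on the same identity $(-1)^{n+m}=-1$, so the parity hypothesis must be threaded consistently through the extension check and the reflection calculation; this is exactly why the lemma assumes one of $n,m$ to be even.
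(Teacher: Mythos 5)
Your proof is correct. For the first claim the paper simply cites L\"uke's identity $\R_{X\hattimes Y}(t)=\R_X(t)\R_Y(t)$ and stops, leaving implicit both the proof of that identity and the closing observation that $\R_X(t)$ and $\R_Y(t)$ cannot simultaneously be nonzero for $1\le t<nm$; you supply both, and your route through the negaperiodic extensions $\bar x,\bar y,\bar z$ --- verifying that $\bar z=\bar x\bar y$ holds globally exactly when $n+m$ is odd, then factoring the correlation sum by the Chinese Remainder Theorem and invoking $\R_X(t+n)=-\R_X(t)$ --- is a clean, self-contained derivation that also pinpoints the first of the two places where the parity hypothesis enters. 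One caveat you share with the paper: the rearrangement $z_r\bar z^*_{r+t}=(\bar x_r\bar x^*_{r+t})(\bar y_r\bar y^*_{r+t})$ uses commutativity and $(uv)^*=u^*v^*$, which fail for general quaternion entries; since the paper inherits the same issue by citing an identity established for complex-valued sequences, this is not a defect of your argument relative to the paper's. For the antipalindromic claim your computation is essentially the paper's: both evaluate the $(nm-r-1)$th entry, cancel the reflected indices using palindromicity of $X$ and $Y$, and extract the residual sign $(-1)^{n+m}=-1$; the paper reaches the exponent via $\floor{-(r+1)/s}=-(\floor{r/s}+1)$ where you instead write $r=pn+s$ and read off $\floor{(nm-r-1)/n}=m-1-p$, an equivalent piece of bookkeeping.
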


\begin{proof}
The fact that $X\hattimes Y$ is odd perfect follows from
$\R_{X\hattimes Y}(t)=\R_{X}(t)\R_{Y}(t)$ as given in~\cite[Eq.~9]{luke2003binary}.
Suppose that $0\leq r<nm$ is arbitrary.  Since $X$ and $Y$ are palindromic we have
$x_{nm-r-1\bmod n}y_{nm-r-1\bmod m} = x_{n-r-1\bmod n}y_{m-r-1\bmod m}
= x_{r\bmod n}y_{r\bmod m}$.
Thus the $(nm-r-1)$th entry of $X\hattimes Y$ is
\[
\begin{split} (-1)^{\floor{(nm-r-1)/n}+\floor{(nm-r-1)/m}} x_{r\bmod n}y_{r\bmod m} \\
= (-1)^{n+m+\floor{-(r+1)/n}+\floor{-(r+1)/m}} x_{r\bmod n}y_{r\bmod m}\mathrlap{.} \end{split}
\]
Using the fact that $\floor{-(r+1)/s}=-(\floor{r/s}+1)$ for positive integers $r$, $s$ and that $n+m$ is odd this becomes
\[ (-1)^{1+\floor{r/n}+\floor{r/m}} x_{r\bmod n}y_{r\bmod m} \]
which is the negative of the $r$th entry of $X\hattimes Y$ as required.
\end{proof}

\begin{example}
The palindromic odd perfect sequences $X=[\verb|++|]$
and $Y=[\verb|+q+|]$ used with Lemma~\ref{lem:oddprod}
produces the antipalindromic odd perfect sequence
$[\verb|+q-+Q-|]$.  (We use $\verb|Q|$ to denote $-q$.)
\end{example}

We now show that infinitely many palindromic odd perfect $Q_{8}$-sequences exist
using a variant of Theorem~\ref{thm:negcon} and the Barrera Acevedo--Dietrich correspondence.

\begin{theorem}\label{thm:oddperfect}
If $(A,B)$ is a Golay pair then
\[ P \coloneqq [-A; jB; k\tilde B; i\tilde A; iA; kB; j\tilde B; -\tilde A] \]
is a palindromic odd perfect $Q_8$-sequence.
\end{theorem}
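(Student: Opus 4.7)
The plan is to exhibit $P$ as the image of the second quadruple of palindromic nega Williamson sequences from Theorem~\ref{thm:negcon} under the entry-wise Barrera Acevedo--Dietrich correspondence of Theorem~\ref{thm:corr} (more precisely, under its odd periodic analogue). Since that source quadruple is palindromic, negacomplementary, and has the $Q_8$-property, the image $P$ will automatically be palindromic, odd perfect, and $Q_8$-valued.

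The first step is to verify the claimed identification. Writing $P$ as eight blocks of length $n=|A|$ and extracting the $r$-th entries of the four sequences comprising the second quadruple of Theorem~\ref{thm:negcon}, one applies the BAD table block by block. In block~0 the quadruple $(a_r,a_r,a_r,a_r)$ is sent to $-a_r$ (since $(+,+,+,+)$ is the negation of the row $(-,-,-,-)\mapsto 1$), which matches $-A$. In block~1 the quadruple $(b_r,b_r,-b_r,-b_r)$ matches the row $(+,+,-,-)\mapsto j$, sending $b_r$ to $jb_r$ and matching $jB$. In block~2, $(b_{n-r-1},-b_{n-r-1},b_{n-r-1},-b_{n-r-1})$ matches the row $(+,-,+,-)\mapsto k$, producing $k\tilde B$. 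The remaining blocks~3--7 produce $i\tilde A$, $iA$, $kB$, $j\tilde B$, $-\tilde A$ by analogous case analyses.

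Once $P$ is recognized as this image, the three desired properties are easy. Palindromicity is inherited because the mapping is entry-wise and the source sequences are palindromic. The $Q_8$-valuedness is equivalent to the $Q_8$-property of the source: the product of the four scalar entries at any position is $+1$ if and only if the image row of the BAD table lies in $Q_8$ rather than $qQ_8$. Odd perfection is obtained by running the proof of Theorem~\ref{thm:corr} with the periodic autocorrelation $R$ replaced throughout by the odd periodic autocorrelation $\R$: the underlying algebraic identity relating $s_rs_{r+t}^*$ to the four scalar products $a_ra_{r+t}$, $b_rb_{r+t}$, $c_rc_{r+t}$, $d_rd_{r+t}$ is local in $r$, so it is unaffected by the alternating sign $(-1)^{\lfloor(r+t)/8n\rfloor}$ that distinguishes $\R$ from $R$. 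The result is that the negacomplementarity of the source (guaranteed by Theorem~\ref{thm:negcon}) lifts to $\R_P(t)=0$ for every $1\le t<8n$.

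The main obstacle is the negaperiodic analogue of Theorem~\ref{thm:corr}: the periodic correspondence is stated for ``Williamson-type'' (amicable) quadruples, and so one must check that the imaginary-part cancellations which make $\sum_X R_X(t)$ the only surviving component of the quaternionic correlation $R_S(t)$ still happen in the negaperiodic setting for the specific quadruple of Theorem~\ref{thm:negcon}. If one wishes to sidestep the correspondence entirely, an alternative is to expand $\R_P(t)$ into its 64 block-pair cross-correlations and cancel them in pairs using the Golay identity $C_A(t)+C_B(t)=0$ together with a reversal identity; this is elementary but considerably more laborious.
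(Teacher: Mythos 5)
Your identification of $P$ with the image of the second quadruple of Theorem~\ref{thm:negcon} under the Barrera Acevedo--Dietrich table is correct (I checked all eight blocks: the sign patterns $(+,+,+,+)$, $(+,+,-,-)$, $(+,-,+,-)$, $(+,-,-,+)$ do map to $-1$, $j$, $k$, $i$ respectively), and the overall route is sound --- but it is genuinely different from the paper's. The paper proves Theorem~\ref{thm:oddperfect} by a direct computation of $\R_P(t)$: it writes $P=[P';\tilde P']$, expands $\R_P(t)=C_{P'}(t)+C_{\tilde P'}(t)+C_{\tilde P',P'}(4n-t)^*-C_{P',\tilde P'}(4n-t)^*$, and cancels everything using the Golay complementarity of $(A,B,\tilde B,\tilde A)$ together with the identities $C_{xA,yB}(t)=xy^*C_{A,B}(t)$ and $C_{\tilde A,\tilde B}(t)=C_{B,A}(t)$. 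Your approach instead outsources the combinatorial work to Theorem~\ref{thm:negcon} and pushes the result through a \emph{negaperiodic} analogue of Theorem~\ref{thm:corr}. That analogue is exactly the point you correctly flag as the remaining obstacle: the paper never states or proves it as a theorem, only asserting parenthetically (for the odd-length constructions) that palindromic real quadruples are nega-amicable, i.e.\ $\R_{X,Y}(t)=\R_{Y,X}(t)$, which is what kills the imaginary components of the quaternionic negacorrelation. Since the quadruple from Theorem~\ref{thm:negcon} is palindromic, nega-amicability does hold for it, and your locality argument (the identity expressing $s_rs_{r+t}^*$ in terms of the four real products is pointwise in $r$, hence compatible with the alternating sign $(-1)^{\floor{(r+t)/8n}}$) is the right way to lift the periodic correspondence to the odd periodic one; but to make the proof self-contained you would need to actually write out that lifted correspondence, which is roughly the same amount of quaternion bookkeeping the paper spends on its direct calculation. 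What your route buys is conceptual clarity and reuse --- it explains \emph{why} Theorem~\ref{thm:oddperfect} is the ``$\Q$-shadow'' of Theorem~\ref{thm:negcon} rather than an independent coincidence --- at the cost of depending on a correspondence lemma the paper deliberately avoids formalizing; the paper's direct computation is more elementary and self-contained.
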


\begin{proof}
This follows by a direct but tedious calculation of $\R_{P}(t)$; we give
an example of how this may be done.
Let $P=[P';\tilde P']$ have length~$8n$.  We find for $0\leq t\leq 4n$ that
\begin{flalign*}
\R_{P}(t) = C_{[P';\tilde P']}(t) - C_{[P';\tilde P']}(8n-t) ^* &&& \\
&&&\mathllap{= C_{P'}(t) + C_{\tilde P'}(t) + C_{\tilde P',P'}(4n-t)^* - C_{P',\tilde P'}(4n-t)^*.}
\end{flalign*}
Suppose that $1\leq t\leq n$.  Then $C_{P'}(t)=\phi(t)+\psi(n-t)^*$ where
\begin{align*}
\phi(t) &\coloneqq C_{-A}(t) + C_{jB}(t) + C_{k\tilde B}(t) + C_{i\tilde A}(t) \\
\psi(t) &\coloneqq C_{jB,-A}(t) + C_{k\tilde B,jB}(t) + C_{i\tilde A,k\tilde B}(t) .
\end{align*}
Since multiplying a sequence by a constant does not change its autocorrelation values
and since $(A,B,\tilde A,\tilde B)$ are complementary we find that $\phi(t)=0$.  Furthermore,
using the fact $C_{xA,yB}(t)=xy^*C_{A,B}(t)$ and $C_{\tilde A,\tilde B}(t)=C_{B,A}(t)$ (since $A$
and $B$ have real entries) we find
\[ \psi(t) = -jC_{B,A}(t) + i C_{\tilde B, B}(t) + j C_{B,A}(t) = i C_{\tilde B,B}(t) . \]
Additionally, we have $C_{\tilde P'}(t)=C_{P'}(t)^*$, $C_{P',\tilde P'}(4n-t)=C_{-A,-\tilde A}(n-t)=C_{A,\tilde A}(n-t)$, and $C_{\tilde P',P'}(4n-t)=C_{iA,i\tilde A}(n-t)=C_{A,\tilde A}(n-t)$.
Then
\[ \begin{split}
\R_{P}(t) = (i C_{\tilde B,B}(n-t))^* + i C_{\tilde B,B}(n-t) {\qquad\,} \\
{\qquad\,} + C_{A,\tilde A}(n-t)^* - C_{A,\tilde A}(n-t)^* = 0 .
\end{split} \]
Similarly one can show $\R_P(t)=0$ for $n<t\leq 4n$ from which the symmetry $\R_X(t)=\R_X(-t)^*$
implies $\R_P(t)=0$ for all $1\leq t<8n$.
\end{proof}

\begin{example}
Using the Golay pair $(\verb|++|,\verb|+-|)$ with Theorem~\ref{thm:oddperfect}
produces the palindromic odd perfect sequence $[\verb|--jJKkiiiikKJj--|]$. (We
denote $-i$ by $\verb|I|$, $-j$ by $\verb|J|$, and $-k$ by $\verb|K|$.)
\end{example}

In particular, palindromic odd perfect sequences exist in all lengths that are a power of two
since Theorem~\ref{thm:oddperfect} implies they exist in all lengths of the form
$2^a5^b13^c$ with $a\geq b+c+3$ and they exist in the lengths~$2$ and~$4$ as
shown by the examples $[\verb|++|]$ and $[\verb|+ii+|]$.
Using the fact that palindromic nega Williamson sequences exist in all odd lengths
up to~$33$ we used the Barrera~Acevedo--Dietrich correspondence to construct palindromic
odd perfect sequences in all odd lengths up to~$33$.
(This relies on the fact that a set of palindromic real sequences $\{A,B,C,D\}$
are necessarily ``nega-amicable'' in that $\R_{X,Y}(t)=\R_{Y,X}(t)$
for all~$t$ and $X,Y\in\{A,B,C,D\}$.)
Furthermore, using the fact that palindromic odd perfect $Q_8$\nobreakdash-sequences
exist in all lengths that are powers of two we used Lemma~\ref{lem:oddprod} to
construct palindromic odd perfect sequences in all even lengths up to~$68$
(see the appendix for an explicit list).
The Barrera Acevedo--Dietrich correspondence applied to these sequences gives
palindromic nega Williamson sequences in all even lengths up to~$68$.
It is conceivable that palindromic odd perfect sequences
and palindromic nega Williamson sequences actually exist
in all even lengths.

\section{Perfect quaternion sequences}\label{sec:perfect}

We now use our constructions for Williamson sequences and palindromic nega
Williamson sequences to show that Williamson sequences and
perfect $Q_8$-sequences exist in all lengths $2^t$
with $t\geq0$.

\begin{theorem}\label{thm:perfect}
Symmetric perfect sequences over $Q_8$ exist for all lengths $2^t$.
\end{theorem}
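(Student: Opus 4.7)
The plan is to reduce the theorem to showing that Williamson sequences with the $Q_8$-property exist in every length $2^t$: by the Barrera Acevedo--Dietrich correspondence of Theorem~\ref{thm:corr}, such a quadruple maps pointwise to a symmetric perfect $Q_8$-sequence of the same length and conversely, so the two statements are equivalent. I would then induct on $t$ using Theorem~\ref{thm:main} for the step $t \to t+2$, starting from three base cases $t=0,1,2$ so that both parities of $t$ are covered thereafter.

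For the base cases I would take $(\verb|+|,\verb|+|,\verb|+|,\verb|+|)$ at length~$1$, the length-$2$ Williamson sequences $(\verb|++|,\verb|++|,\verb|+-|,\verb|+-|)$ from Example~\ref{ex:simple}, and $(\verb|++-+|,\verb|++-+|,\verb|++-+|,\verb|++-+|)$ at length~$4$. Each clearly satisfies $a_rb_rc_rd_r=1$ at every position, so all three are Williamson sequences with the $Q_8$-property.

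The inductive step feeds the hypothesis $(A,B,C,D)$ of length $2^t$ (with $t\geq 1$) together with antipalindromic nega Williamson sequences $(A',B',C',D')$ of the same length into Theorem~\ref{thm:main}, producing Williamson sequences of length $2^{t+2}$; the one substantive claim I would record is that the three operations $\doub$, $\n$, $\shuffle$ all preserve the $Q_8$-property. The checks are each one line: $\doub(X)$ merely repeats entries, so pointwise products are unchanged; $\n(X')$ repeats entries with a simultaneous sign flip on the second half, and the fourfold product $(-a'_r)(-b'_r)(-c'_r)(-d'_r)$ absorbs the sign; and $X\shuffle Y$ at each index samples either from a $\doub$-quadruple or an $\n$-quadruple, each already having the $Q_8$-property. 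Hence if both input quadruples have the $Q_8$-property, so does the output.

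What remains is to supply antipalindromic nega Williamson sequences with the $Q_8$-property in every length $2^t$, $t\geq 1$. For $t=1$ I would cite the explicit length-$2$ example from Example~\ref{ex:simple}. For $t\geq 2$ I would apply Theorem~\ref{thm:negcon} to a Golay pair of length $2^a$ (available for every $a\geq 0$, starting from the trivial length-$1$ pair), which directly yields \emph{palindromic} nega Williamson sequences with the $Q_8$-property of lengths $2^{a+2}$ and $2^{a+3}$; between them these cover all lengths $2^t$ with $t\geq 2$. Lemma~\ref{lem:pal} then converts palindromic to antipalindromic, and the $Q_8$-property survives because the conversion consists of a negacyclic shift applied uniformly to all four sequences, which leaves the pointwise product $a_rb_rc_rd_r$ invariant (four sign flips cancel). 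The main obstacle here is not depth but bookkeeping: one must track the $Q_8$-property through each of doubling, negadoubling, interleaving, and the uniform negacyclic shift, and confirm that none of them breaks it.
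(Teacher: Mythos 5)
Your proposal is correct and follows essentially the same route as the paper: reduce to Williamson sequences with the $Q_8$-property via the Barrera Acevedo--Dietrich correspondence, supply antipalindromic nega Williamson sequences with the $Q_8$-property in all lengths $2^t$ via Theorem~\ref{thm:negcon} applied to Golay pairs plus Lemma~\ref{lem:pal}, and induct with Theorem~\ref{thm:main} in steps of $t\to t+2$ from small base cases. Your explicit verification that $\doub$, $\n$, $\shuffle$, and the uniform negacyclic shift all preserve the $Q_8$-property is a point the paper leaves implicit, so it is a welcome addition rather than a deviation.
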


\begin{proof}
First, note that Golay sequences exist in all lengths $2^t$.
By Theorem~\ref{thm:negcon} and Lemma~\ref{lem:pal}
it follows that antipalindromic nega Williamson sequences with the $Q_8$-property
exist in all lengths~$2^t$ for $t\geq2$ (they also exist for smaller $t$,
see Example~\ref{ex:simple}).
Theorem~\ref{thm:main}
then implies that if Williamson sequences with the $Q_8$-property
exist in length $2^t$ they also exist in length $2^{t+2}$ for all $t\geq1$.
Additionally, it is known that Williamson sequences with the $Q_8$-property
exist in lengths~$2$ and~$4$ (see below).
By induction, Williamson sequences
with the $Q_8$-property exist in all lengths $2^t$ for $t\geq0$.
By the Barrera Acevedo--Dietrich correspondence symmetric perfect $Q_8$-sequences
exist in all lengths $2^t$ as well.
\end{proof}

\begin{example}\label{ex:perfect}
We use the base Golay pair $(\verb|+|,\verb|+|)$, the base Williamson sequences
$(\verb|+|,\verb|+|,\verb|+|,\verb|+|)$, $(\verb|+-|,\verb|+-|,\verb|++|,\verb|++|)$,
and the base nega Williamson sequences $(\verb|++|,\verb|++|,\verb|++|,\verb|++|)$.
Additionally, we use Golay's interleaving doubling construction~\cite{golay1961complementary}
to generate larger Golay pairs via the mapping $(A,B)\mapsto(A\shuffle B,A\shuffle -B)$.
From Theorem~\ref{thm:altmain} we generate the Williamson sequences
$(\verb|++-+|,\verb|++-+|,\verb|++-+|,\verb|++-+|)$, from Theorem~\ref{thm:negcon}
and Lemma~\ref{lem:pal} we generate the antipalindromic nega Williamson sequences
$(\verb|++++|,\verb|++++|,\verb|-++-|,\verb|-++-|)$, and from Theorem~\ref{thm:main}
we generate the Williamson sequences $(\verb|+-+++++-|,\verb|+-+++++-|,\verb|+--+++--|,\verb|+--+++--|)$.
Continuing in this way and using the Barrera Acevedo--Dietrich correspondence produces
perfect $Q_8$-sequences of lengths $2^t$ for all $t\geq0$.
We denote $-i$ by $\verb|I|$, $-j$ by $\verb|J|$, and $-k$ by $\verb|K|$
and explicitly give the sequences produced with this construction for $t\leq7$:
\[ [\verb|-|],\; [\verb|-j|],\; [\verb|--+-|],\; [\verb|-+j---j+|],\; [\verb|-+-J+j-----j+J-+|] \]
\[ [\verb|-i++jJ-K-k-jj-+I-I+-jj-k-K-Jj++i|], \]
\[ \mathllap{[}\verb|-+++-iJI+kjK-J-J-j-j-kjK+iJI--+-| \]\vspace{-1.6\baselineskip}
\[ \verb|--+--IJi+Kjk-j-j-J-J-Kjk+IJi-+++|\mathrlap{],} \]
\[ \mathllap{[}\verb|-iii+i+Ij+J+--K+-Jkj-JjJjk-K+KIK| \]\vspace{-1.6\baselineskip}
\[ \verb|-kIk+k-Kjjjj-Jkj--K+--J-ji+I+IiI| \]\vspace{-1.6\baselineskip}
\[ \verb|-IiI+I+ij-J--+K--jkJ-jjjjK-k+kIk| \]\vspace{-1.6\baselineskip}
\[ \verb|-KIK+K-kjJjJ-jkJ-+K--+J+jI+i+iii|\mathrlap{]} \]
\end{example}

The perfect sequences generated by Theorem~\ref{thm:perfect} for $t\geq7$ are counterexamples
to the quaternionic form of Mow's conjecture presented by Blake~\cite{blake2016constructions}
because they can be generated using an orthogonal matrix construction, as we now show.

\begin{theorem}\label{thm:matrices}
Let $n\geq32$ and let~$M$ be the $(n/4)\times 4$ matrix containing
the entries of a perfect sequence~$P$ generated by Theorem~\ref{thm:perfect}
using the second set from Theorem~\ref{thm:negcon}.
Write the entries of~$P$ in~$M$ from left to right and top to bottom (i.e.,
$M_{i,j}=P_{4i+j}$).
Then the columns of~$M$ have the array orthogonality property.
\end{theorem}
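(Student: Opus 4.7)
My plan is to unfold one layer of Theorem~\ref{thm:perfect} and then trace how the four columns of $M$ relate to the component sequences produced along the way.

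First I would note that, since the correspondence from Theorem~\ref{thm:corr} acts entrywise and respects $(-a,-b,-c,-d)\mapsto -s$, the length-$n$ perfect sequence $P$ obtained from Theorem~\ref{thm:perfect} can be written as
\[ P \;=\; \doub(P^A)\shuffle\n(P^{A'}), \]
where $P^A$ is the length-$n/4$ perfect $Q_8$-sequence produced by applying Theorem~\ref{thm:corr} to the inner Williamson sequences $(A,B,C,D)$, and $P^{A'}$ is the length-$n/4$ sequence obtained by applying the correspondence entrywise to the antipalindromic nega Williamson sequences $(A',B',C',D')$ coming from Theorem~\ref{thm:negcon}'s second set via Lemma~\ref{lem:pal}. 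Reading off $M_{i,j}=P_{4i+j}$ and using that $n/4$ is even for $n\geq 32$ then identifies the four columns of $M$ as
\[ C_0=\doub(P^A_e),\quad C_2=\doub(P^A_o),\quad C_1=\n(P^{A'}_e),\quad C_3=\n(P^{A'}_o), \]
where the subscripts $e,o$ denote the even- and odd-indexed subsequences.

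Array orthogonality of $M$ requires (i) $\sum_j R_{C_j}(t)=0$ for $1\leq t<n/4$ and (ii) $R_{C_j,C_k}(t)=0$ for $j\neq k$ and $0\leq t<n/4$. Item (i) follows immediately from the perfection of $P$, since the mod-4 block decomposition of $R_P$ gives $R_P(4t)=\sum_j R_{C_j}(t)$. For (ii), the eight cross-correlations $R_{C_j,C_k}$ pairing a $\doub$-column $C_j\in\{C_0,C_2\}$ with a $\n$-column $C_k\in\{C_1,C_3\}$ vanish directly by property~3. The cross-correlation $R_{C_0,C_2}$ reduces via the cross-correlation analog of property~1 to $R_{P^A_e,P^A_o}$; and since $(A,B,C,D)$ itself arises from one more level of Theorem~\ref{thm:main}, $P^A$ has the same $\doub\shuffle\n$ form, so $P^A_e$ is a doubling and $P^A_o$ is a negadoubling, and a second application of property~3 gives $R_{C_0,C_2}=0$.

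The main obstacle is the remaining pair $R_{C_1,C_3}=0$. By the cross-correlation analog of property~2, this reduces to $\R_{P^{A'}_e,P^{A'}_o}(t)=0$. Here the explicit structure of Theorem~\ref{thm:negcon}'s second set enters. A direct block-wise quaternion computation shows that
\[ P^{A'}=[-iA_G;-kB_G;-j\tilde B_G;\tilde A_G;-A_G;jB_G;k\tilde B_G;i\tilde A_G], \]
which is the negacyclic shift by $n/8$ of the length-$n/4$ odd perfect sequence
\[ \hat P=[-A_G;jB_G;k\tilde B_G;i\tilde A_G;iA_G;kB_G;j\tilde B_G;-\tilde A_G] \]
from Theorem~\ref{thm:oddperfect} built from the same Golay pair $(A_G,B_G)$. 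Since $n/8$ is even, this shift commutes with the even/odd split ($P^{A'}_e$ and $P^{A'}_o$ are negacyclic shifts of $\hat P_e$ and $\hat P_o$ by $n/16$) and preserves odd periodic cross-correlations, so it suffices to show $\R_{\hat P_e,\hat P_o}(t)=0$. Because $\hat P=[X;\tilde X]$ with $X=[-A_G;jB_G;k\tilde B_G;i\tilde A_G]$ is palindromic, its even part equals the reverse of its odd part; thus this cross-correlation becomes the skew identity $\R_{\widetilde O,O}(t)=0$ where $O=\hat P_o$. I would verify this by a direct block-by-block expansion, reducing it to aperiodic correlations of the even/odd subparts of $A_G$ and $B_G$ and showing the sum vanishes using the Golay defining identity $C_{A_G}(t)+C_{B_G}(t)=0$ for $t\neq 0$ (split according to the parity of $t$) together with the quaternion multiplication table. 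This Golay-based skew-autocorrelation calculation is the only substantive computation in the proof; everything else is mechanical use of properties~1--8 and the recursive $\doub\shuffle\n$ structure of the construction.
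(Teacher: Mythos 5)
Your proposal is structurally correct and coincides with the paper's proof on most of its length: the identification of the four columns as $\doub(P^A_e),\n(P^{A'}_e),\doub(P^A_o),\n(P^{A'}_o)$, the derivation of the diagonal condition from the perfection of $P$ via the interleaving property, the vanishing of the four $\doub$-versus-$\n$ cross-correlations by property~3, and the recursive $\doub\shuffle\n$ argument for $R_{C_0,C_2}=0$ are all exactly the paper's steps. Where you genuinely diverge is the last pair $R_{C_1,C_3}$. The paper writes explicit four-block forms for $B_e$ and $B_o$ in terms of the Golay pair and grinds out $\R_{B_e,B_o}(t)=0$ directly; you instead recognize $P^{A'}$ as the $n/8$-fold negacyclic shift of the palindromic odd perfect sequence of Theorem~\ref{thm:oddperfect}, use the evenness of $n/8$ to commute the shift past the even/odd split, and use palindromicity ($\hat P_e=\widetilde{\hat P_o}$) to compress the target into the single skew identity $\R_{\widetilde{O},O}(t)=0$. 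That reorganization is clean and correct (I checked your block form for $P^{A'}$ against the paper's length-16 example), and you rightly observe that odd perfection of $\hat P$ alone only yields $\R_{\hat P_e,\hat P_o}(t)+\R_{\hat P_o,\hat P_e}(t+1)=0$, so the identity does not come for free. But that identity is precisely where all the substance of this theorem lives, and your proposal only sketches it ("I would verify this by a direct block-by-block expansion\dots"); as written you have relocated the hard computation, not performed it. Two small cautions about the sketch itself: in the paper's version of this calculation the cancellation comes from the quaternion coefficients together with $C_{\tilde E,\tilde D}(t)=C_{D,E}(t)$ rather than from the Golay complementarity identity you plan to invoke, so be prepared for the bookkeeping to look different from what you expect; and for $n=32$ the Golay blocks have length $1$, so the "even/odd subparts of $A_G$ and $B_G$" do not align with block boundaries and that edge case needs separate (trivial but explicit) treatment.
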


\begin{proof}
Let $M_0$, $M_1$, $M_2$, $M_3$ denote the columns of $M$ as sequences.  We must
show that $\sum_{r=0}^{3}R_{M_r}(t)=0$ for all $1\leq t<n/4$ and that
$R_{M_r,M_s}(t)=0$ for all $t$ and $0\leq r<s<4$.

By construction $P=(M_0\shuffle M_2)\shuffle(M_1\shuffle M_3)$ is a perfect sequence and therefore
\[ R_{(M_0\shuffle M_2)\shuffle(M_1\shuffle M_3)}(t)=0 \quad \text{for all $1\leq t<n$.} \]
By property~4 in Section~\ref{sec:construction} this implies
$R_{M_0\shuffle M_2}(t)+R_{M_1\shuffle M_3}(t)=0$ for all $1\leq t<n/2$
and $\sum_{r=0}^{3}R_{M_r}(t)=0$ for all $1\leq t<n/4$.

Since $P$ was generated by applying the Barrera Acevedo--Dietrich correspondence
to the sequences generated in Theorem~\ref{thm:main} we have
$P=\doub(A)\shuffle\n(B)$ for some perfect symmetric sequence $A$ and antipalindromic sequence $B$ of even length~$n/4$.
Let $X'$ denote the sequence formed by the even entries of $X$
and let $X''$ denote the sequence formed by the odd entries of~$X$.
Then we have
\[ M_0 = \doub(A') , \; M_1 = \n(B') , \; M_2 = \doub(A'') , \; M_3 = \n(B'') . \]
By property~3 of Section~\ref{sec:construction} this representation yields
$R_{M_0,M_1}(t)=R_{M_0,M_3}(t)=R_{M_1,M_2}(t)=R_{M_2,M_3}(t)=0$ for all $t$
and only the crosscorrelation of the pairs $(M_0,M_2)$ and $(M_1,M_3)$ are left to consider.
Since $A=A'\shuffle A''$ is perfect and was generated by applying Theorem~\ref{thm:main}
we have that $A'$ is of the form $\doub(C_1)$ for some $C_1$ and $A''$ is of the form $\n(C_2)$
for some $C_2$.  Property~3 of Section~\ref{sec:construction} then yields
$R_{M_0,M_2}(t)=2R_{A',A''}(t)=2R_{\doub(C_1),\n(C_2)}(t)=0$ for all $t$.

Lastly, we must show $R_{M_1,M_3}(t)=0$.
Suppose the antipalindromic $B$ was generated from Theorem~\ref{thm:negcon} using the Golay pair $(D,E)$.
An analysis of the Barrera Acevedo--Dietrich correspondence shows that we have
\[ B' = [iD; j\tilde E; D; -k\tilde E], \quad B'' = [kE; -\tilde D; -jE; -i\tilde D] . \]
We have $\R_{B',B''}(t) = C_{B',B''}(t) - C_{B'',B'}(n/8-t)^*$ by definition.
Suppose that $0\leq t\leq n/32$ so that $C_{B'',B'}(n/8-t)=C_{kE,-k\tilde E}(n/32-t)=-C_{E,\tilde E}(n/32-t)$
and $C_{B',B''}(t)=\phi(t)+\psi(n/32-t)^*$ where
\begin{align*}
&\begin{varwidth}{\columnwidth}$\mspace{1.5mu}\phi(t) \coloneqq C_{iD,kE}(t) + C_{j\tilde E,-\tilde D}(t) + C_{D,-jE}(t) + C_{-k\tilde E,-i\tilde D}(t),$\end{varwidth} \\
&\begin{varwidth}{\columnwidth}$\psi(t) \coloneqq C_{-\tilde D,iD}(t) + C_{-jE,j\tilde E}(t) + C_{-i\tilde D,D}(t).$\end{varwidth}
\end{align*}
Note that $C_{\tilde E,\tilde D}(t)=C_{D,E}(t)$ since $D$ and $E$ contain real entries.  Then
\begin{align*}
\phi(t) &= jC_{D,E}(t) - jC_{\tilde E,\tilde D}(t) + jC_{D,E}(t) -jC_{\tilde E,\tilde D}(t) = 0, \\
\psi(t) &= -iC_{\tilde D,D}(t) - C_{E,\tilde E}(t) + iC_{\tilde D,D}(t) = -C_{E,\tilde E}(t) .
\end{align*}
Finally, $R_{M_1,M_3}(t)=2\R_{B',B''}(t)=2\bigl(-C_{E,\tilde E}(n/32-t)^*-(-C_{E,\tilde E}(n/32-t))^*\bigr)=0$ for $0\leq t\leq n/32$.
A similar calculation shows $\R_{B',B''}(t)=0$ for $n/32<t<n/8$ from which it follows that $R_{M_1,M_3}(t)=0$
for all~$t$. 
\end{proof}

\begin{example}\label{ex:matrices}
For $n=16$ we use the perfect sequence found in Example~\ref{ex:perfect}
and the matrix it generates has the array orthogonality property as well.
Otherwise we give the matrices constructed using Theorem~\ref{thm:matrices}
for $n=32$, $64$, and $128$.  The transpose of the matrices are displayed
to save space.
\[\begin{gathered} \setlength\arraycolsep{0pt}
\begin{bmatrix}
\verb|-| & \verb|+| & \verb|-| & \verb|+| \\
\verb|+| & \verb|j| & \verb|-| & \verb|J| \\
\verb|-| & \verb|-| & \verb|-| & \verb|-| \\
\verb|J| & \verb|-| & \verb|j| & \verb|+| 
\end{bmatrix}\;
\begin{bmatrix}
\verb|-| & \verb|j| & \verb|-| & \verb|j| & \verb|-| & \verb|j| & \verb|-| & \verb|j| \\
\verb|I| & \verb|J| & \verb|-| & \verb|k| & \verb|i| & \verb|j| & \verb|+| & \verb|K| \\
\verb|+| & \verb|-| & \verb|-| & \verb|+| & \verb|+| & \verb|-| & \verb|-| & \verb|+| \\
\verb|K| & \verb|+| & \verb|j| & \verb|i| & \verb|k| & \verb|-| & \verb|J| & \verb|I| 
\end{bmatrix}\;
\begin{bmatrix}
\verb|-| & \verb|-| & \verb|+| & \verb|-| & \verb|-| & \verb|-| & \verb|+| & \verb|-| & \verb|-| & \verb|-| & \verb|+| & \verb|-| & \verb|-| & \verb|-| & \verb|+| & \verb|-| \\
\verb|I| & \verb|K| & \verb|j| & \verb|+| & \verb|-| & \verb|j| & \verb|K| & \verb|i| & \verb|i| & \verb|k| & \verb|J| & \verb|-| & \verb|+| & \verb|J| & \verb|k| & \verb|I| \\
\verb|+| & \verb|J| & \verb|j| & \verb|-| & \verb|-| & \verb|j| & \verb|J| & \verb|+| & \verb|+| & \verb|J| & \verb|j| & \verb|-| & \verb|-| & \verb|j| & \verb|J| & \verb|+| \\
\verb|I| & \verb|k| & \verb|J| & \verb|+| & \verb|-| & \verb|J| & \verb|k| & \verb|i| & \verb|i| & \verb|K| & \verb|j| & \verb|-| & \verb|+| & \verb|j| & \verb|K| & \verb|I| 
\end{bmatrix}\\ \setlength\arraycolsep{0pt}
\begin{bmatrix}
\verb|-| & \verb|+| & \verb|j| & \verb|-| & \verb|-| & \verb|-| & \verb|j| & \verb|+| & \verb|-| & \verb|+| & \verb|j| & \verb|-| & \verb|-| & \verb|-| & \verb|j| & \verb|+| & \verb|-| & \verb|+| & \verb|j| & \verb|-| & \verb|-| & \verb|-| & \verb|j| & \verb|+| & \verb|-| & \verb|+| & \verb|j| & \verb|-| & \verb|-| & \verb|-| & \verb|j| & \verb|+| \\
\verb|I| & \verb|I| & \verb|K| & \verb|K| & \verb|J| & \verb|j| & \verb|-| & \verb|+| & \verb|-| & \verb|-| & \verb|j| & \verb|j| & \verb|k| & \verb|K| & \verb|I| & \verb|i| & \verb|i| & \verb|i| & \verb|k| & \verb|k| & \verb|j| & \verb|J| & \verb|+| & \verb|-| & \verb|+| & \verb|+| & \verb|J| & \verb|J| & \verb|K| & \verb|k| & \verb|i| & \verb|I| \\
\verb|I| & \verb|K| & \verb|J| & \verb|+| & \verb|-| & \verb|j| & \verb|k| & \verb|i| & \verb|i| & \verb|k| & \verb|j| & \verb|-| & \verb|+| & \verb|J| & \verb|K| & \verb|I| & \verb|I| & \verb|K| & \verb|J| & \verb|+| & \verb|-| & \verb|j| & \verb|k| & \verb|i| & \verb|i| & \verb|k| & \verb|j| & \verb|-| & \verb|+| & \verb|J| & \verb|K| & \verb|I| \\
\verb|I| & \verb|i| & \verb|k| & \verb|K| & \verb|J| & \verb|J| & \verb|+| & \verb|+| & \verb|-| & \verb|+| & \verb|J| & \verb|j| & \verb|k| & \verb|k| & \verb|i| & \verb|i| & \verb|i| & \verb|I| & \verb|K| & \verb|k| & \verb|j| & \verb|j| & \verb|-| & \verb|-| & \verb|+| & \verb|-| & \verb|j| & \verb|J| & \verb|K| & \verb|K| & \verb|I| & \verb|I| 
\end{bmatrix}
\end{gathered}
\]
\end{example}

\section{Previous Work}\label{sec:previouswork}

Despite an enormous amount of work
the conjecture that the longest perfect sequences
over the complex $n$th roots of unity have length~$n^2$ remains open, though some
special cases of this conjecture have been resolved.
In the case $n=2$, Turyn~\cite{turyn1965character} showed that
any perfect binary sequence of length longer than four
must have length~$4m^2$ for odd~$m$ that is not a prime power.
Despite this progress the general conjecture 
even for $n=2$ remains open~\cite{leung2012new}.
In the case $n=4$, the conjecture states that the longest
perfect sequence over $\{\pm1,\pm i\}$
has length~$16$ and
Turyn~\cite{turyn1970complex} showed that the length of
any longer perfect sequence cannot be of the
form $2p^k$ for any prime $p$ and integer~$k$.
Most recently, Ma and Ng~\cite{ma2009non} showed many restrictions on the
length of perfect sequences over the $p$th roots of unity
for prime $p$.  In particular, they showed that no perfect
sequences of length $2p^{k+2}$ or $p^{k+3}$ exist for $k\geq0$.

Several other constructions for perfect sequences over complex roots of unity
have been found since the construction of Frank and Heimiller.
In 1972, Chu described a method~\cite{chu1972polyphase} of producing perfect sequences
of any length $n$.  Shortly after Chu's paper
was published, Frank published a response~\cite{frank1973comments} pointing out that
Zadoff had discovered the same construction and had been granted a patent
for a communication system based on his construction~\cite{zadoff1963phase}.
Other variants of Zadoff and Chu's sequences have been described by
Alltop~\cite{alltop1984decimations} and Lewis and Kretschmer~\cite{lewis1982linear}.
In 1983, Milewski~\cite{milewski1983periodic} found a new construction for perfect sequences
of length $n^{2t+1}$ over $n^{t+1}$th roots of unity for all $n,t\geq1$.
In 2004, Liu and Fan~\cite{liu2004modified} found a new construction for perfect sequences
of length~$n$ over $n$th roots of unity when~$n$ is a multiple of four.
In 2014, Blake and Tirkel~\cite{blake2014construction} gave a construction
for perfect sequences of length~$4mn^{t+1}$ over $2mn^t$th roots of unity for $m,n,t\geq1$.

Blake has run extensive searches for perfect sequences over
complex roots of unity and quaternions~\cite{blake2016constructions}
and has found a number of perfect sequences over the $n$th roots of unity
that cannot be generated using matrices with array orthogonality (like
those from the Frank--Heimiller and Mow constructions).
The sequences that he found are counterexamples to the conjecture
that Mow's unified construction produces all perfect sequences
over $n$th roots of unity~\cite{mow1996new}
but are shorter than~$n^2$ and thus
are not counterexamples to Mow's original conjecture~\cite{mow1995unified}
(also generalized by
L\"uke, Schotten, and Hadinejad-Mahram~\cite{luke2003binary}
to odd perfect sequences).

Apparently no infinite family of odd perfect $Q_8$-sequences
has been previously constructed.\footnote{A construction given in~\cite{blake2016constructions}
uses the term `odd-perfect' but with an alternative meaning that $R_A(t)=0$ for odd $t$.}
It is known that no infinite family of odd perfect sequences can exist
over the alphabet~$\{\pm1\}$---L\"uke, Schotten, and Hadinejad-Mahram show that the
longest odd perfect $\{\pm1\}$-sequence has length two
and they conjecture
that the longest odd perfect sequence over the alphabet $\{\pm1,\pm i\}$
has length four~\cite{luke2003binary}.
However, L\"uke has constructed \emph{almost} perfect
and odd perfect $\{\pm1,\pm i\}$\nobreakdash-sequences~$A$
(with $R_A(t)=0$ or $\R_A(t)=0$ for all $1\leq t<n$ except $t=n/2$) in all lengths $q+1$
where $q\equiv1\pmod4$ is a prime power~\cite{luke2001almost}.

Some work has also been done on perfect quaternion arrays.
In 2013, Barrera Acevedo and Jolly~\cite{acevedo2014perfect} constructed perfect $Q_8$-arrays
of size $2\times(p+1)/2$ for primes $p\equiv1\pmod{4}$.  Furthermore,
they extended a construction of Arasu and de Launey~\cite{arasu2001two}
to produce perfect $Q_8$-arrays of size $2p\times p(p+1)/2$ for primes $p\equiv1\pmod{4}$.
Additionally, Blake~\cite{blake2016constructions} found a construction for perfect $Q_8$-arrays of
size $2^t\times2^t$ with $2\leq t<7$.

There does not seem to be much prior work on nega Williamson sequences, though
Xia, Xia, Seberry, and Wu~\cite{xia2006hadamard}
study them under the name ``4-suitable negacyclic matrices''.
They give constructions for them
in terms of Golay pairs, Williamson sequences, and base sequences.
However, we could find no prior constructions that specifically generated
palindromic or antipalindromic nega Williamson sequences.
L\"uke presents a method~\cite{luke1997binary} of constructing pairs
of negacomplementary binary sequences,
also studied under the name ``associated pairs'' by Ito~\cite{ito2000hadamard}
and ``negaperiodic Golay pairs'' by Balonin and \DJ{}okovi\'c~\cite{balonin2015negaperiodic}.
L\"uke and Schotten~\cite{luke1995odd} also give a construction for
odd perfect almost binary sequences
that Wen, Hu, and Jin~\cite{wen2004design} use to construct negacomplementary binary sequences.
Jin et~al.~\cite{jin2009necessary} give necessary conditions for the existence of
negacomplementary sequences and a doubling construction for negacomplementary sequences.
Yang, Tang, and Zhou~\cite{yang2015binary} show that a $\{\pm1\}$-sequence~$A$ of length~$n$
satisfies $\max_{1\leq t<n}\bigl\lvert\R_A(t)\bigr\rvert-1\geq(n-1)\bmod{2}$
and give constructions for sequences that are optimal,
i.e., ones that meet the bound exactly.

Williamson sequences have been quite well-studied since introduced
by Williamson 75 years ago~\cite{williamson1944hadamard}.
They are often presented using matrix notation and known as ``Williamson matrices''
since one of their traditional applications has been to construct Hadamard matrices.
In this formulation Williamson matrices are defined to be circulant
(i.e., each row is the cyclic shift of the previous row)
and Williamson sequences are simply the first rows of Williamson matrices.
Baumert and Hall~\cite{baumert1965hadamard} performed an exhaustive search
for Williamson sequences of odd length $n\leq23$ and presented a doubling
construction for a generalization of Williamson matrices
that are symmetric but not circulant.  Turyn~\cite{turyn1972infinite}
constructed Williamson sequences in all lengths $(q+1)/2$ where
$q\equiv1\pmod{4}$ is a prime power, Whiteman~\cite{whiteman1976hadamard}
constructed Williamson sequences in all lengths $q(q+1)/2$ where $q\equiv1\pmod{4}$
is a prime power, and Spence~\cite{spence1977infinite} constructed Williamson sequences
in all lengths $q^t(q+1)/2$ where $q\equiv1\pmod{4}$ is a prime power and $t\geq0$.

Computer searches have determined that Williamson sequences in odd lengths are
particularly rare.
Following Baumert and Hall's exhaustive search,
Baumert~\cite{baumert1966hadamard} found a new set of Williamson sequences
in length~$29$, Sawade~\cite{40002775009}
found eight new sets in lengths~$25$ and~$27$,
Yamada~\cite{yamada1979williamson} found one new set in length~$37$,
Koukouvinos and Kounias~\cite{koukouvinos1988hadamard} found four new sets in length~$33$,
\DJ{}okovi\'c~\cite{dokovic1992williamson,dokovic1993williamson,dokovic1995note}
found six new sets in the lengths $25$, $31$, $33$, $37$, and $39$, van Vliet found one new set in length~$51$
(unpublished but appears in~\cite{holzmann2008williamson}),
Holzmann, Kharaghani, and Tayfeh-Rezaie~\cite{holzmann2008williamson} found one new set in length $43$,
and Bright, Kotsireas, and Ganesh~\cite{bright2019applying} found one new set in length $63$.
\DJ{}okovi\'c~\cite{dokovic1993williamson} also found that no Williamson sequences exist
in length $35$ and Holzmann, Kharaghani, and Tayfeh-Rezaie~\cite{holzmann2008williamson}
found that no Williamson sequences exist in lengths $47$, $53$, and~$59$.

In even lengths Williamson sequences are much more common, as first shown by an exhaustive
search up to length~$18$ by Kotsireas and Koukouvinos~\cite{kotsireas2006constructions}.
Non-exhaustive searches were performed up to length~$34$ by Bright et~al.~\cite{bright2016mathcheck2},
and up to length~$42$ by Zulkoski et~al.~\cite{zulkoski2017}.
Bright~\cite{brightthesis} completed an exhaustive search up to length~$44$ and
Bright, Kotsireas, and Ganesh~\cite{bright2019applying} completed an exhaustive search in
the even lengths up to~$70$.  Barrera Acevedo and Dietrich's correspondence~\cite{acevedo2019new}
can be used to generate Williamson sequences in many lengths including~$70$.

\section{Conclusion}\label{sec:conclusion}

We have shown that perfect and odd perfect $Q_8$-sequences exist in all lengths that are a power of two.
Barrera Acevedo and Dietrich~\cite{acevedo2018perfect} summarize the knowledge (as of 2018) of perfect $Q_8$-sequences
as follows:
\begin{quote}
Currently there exists only one infinite family of perfect sequences over
the quaternions (of magnitude one)\dots
\end{quote}
This infinite family was found by Barrera Acevedo and Hall~\cite{acevedo2012perfect}
who gave a construction for perfect $Q_8$-sequences
in lengths of the form $q+1$ where $q\equiv1\pmod{4}$ is a prime power.
Thus, our construction for lengths of the form~$2^t$ is the second known infinite family
of perfect $Q_8$-sequences.
Additionally, our construction for odd perfect $Q_8$-sequences is the first known infinite family
of odd perfect $Q_8$-sequences.
Because our perfect sequences can be constructed
using matrices with array orthogonality (as shown in Theorem~\ref{thm:matrices})
they disprove Blake's conjecture~\cite[Conjecture 8.2.1]{blake2016constructions}
that the longest perfect $Q_8$-sequences generated from an orthogonal array construction
have length $64$.  Theorem~\ref{thm:matrices} also implies the existence of perfect $Q_8$-arrays
of sizes $2^t\times 4$ for all~$t\geq2$ and the construction of
Barrera Acevedo and Jolly~\cite{acevedo2014perfect} then implies the existence
of perfect $Q_8$-arrays of size $2^tp\times 4p$ when $p=2^{t+2}-1$ is prime.

Furthermore, we generalize Williamson's doubling construction~\cite{williamson1944hadamard} from 1944,
showing that the existence of Williamson sequences of odd length $n$ implies not only
the existence of Williamson sequences of length $2n$ but also implies the existence
of Williamson sequences of length $2^tn$ for all $t\geq1$.
We have also shown the importance of nega Williamson sequences, a class of sequences
defined by Xia et~al.~\cite{xia2006hadamard} in 2006.  In particular, we have demonstrated
the importance of palindromic nega Williamson sequences.

Lastly, our constructions provide an explanation for the abundance of Williamson sequences
in lengths that are divisible by a large power of two.  Prior to this work
it was noticed that Williamson sequences are much
more abundant in these lengths.  For example, fewer than 100 sets of Williamson
sequences are known to exist in the odd lengths, but
an exhaustive computer search~\cite{bright2019applying} found 130{,}739
sets of Williamson sequences in the even lengths up to~$70$.
We found that it was possible to generate 95{,}759 (about 75\%) of these sets
using Theorems~\ref{thm:main} and~\ref{thm:altmain}.
Thus, these theorems provide an explanation for the existence of many Williamson sequences.
However, they still do not explain the existence of Williamson sequences in all even
lengths.  In particular, they only work in lengths that are multiples of four.

It would also be interesting to find a construction that works in the even lengths
that are not multiples of four.
Williamson's doubling result can be used for lengths $2n$ but requires
that Williamson sequences of odd length~$n$ exist.
Barrera Acevedo and Dietrich's correspondence can be used in certain cases even if Williamson sequences
of length~$n$ do not exist, assuming~$n$ is composite.
For example, the Barrera Acevedo--Dietrich correspondence
implies that Williamson sequences of length~$70$ exist since Williamson sequences
of length~$7$ exist and Williamson sequences with the $Q_8$-property of length~$10$ exist.
However, we could only generate about 40\%
of the Williamson sequences in length~$70$ using the Barrera Acevedo--Dietrich correspondence
suggesting that there is another construction for Williamson sequences
that is currently unknown.

\bibliographystyle{IEEEtran}
\bibliography{ieee}

\section*{Appendix}

\subsection{Proofs}

We now give proofs of the properties from Section~\ref{sec:construction}.
In each case we let $X=[x_0,\dotsc,x_{n-1}]$ and $Y=[y_0,\dotsc,y_{n-1}]$ be arbitrary
sequences of length $n$.

\begin{enumerate}
\item $R_{\doub(X)}(t)=2R_X(t)$.

The $r$th entry of $\doub(X)$ is $x_{r\bmod n}$.  Then
\[ R_{\doub(X)}(t)=\sum_{r=0}^{2n-1}x_{r\bmod n}x^*_{r+t\bmod n} = 2R_X(t) . \]
\item $R_{\n(X)}(t)=2\R_X(t)$.

The $r$th entry of $\n(X)$ is $(-1)^{\floor{r/n}}x_{r\bmod n}$.  Then $R_{\n(X)}(t)$ is
\begin{align*}
&\sum_{r=0}^{2n-1}(-1)^{\floor{r/n}}x_{r\bmod n}(-1)^{\floor{(r+t)/n}}x^*_{r+t\bmod n} \\
&= \sum_{r=0}^{n-1}x_r(-1)^{\floor{(r+t)/n}}x^*_{r+t\bmod n} \\
&\quad+ \sum_{r=0}^{n-1}(-x_r)(-1)^{\floor{(r+n+t)/n}}x^*_{r+t\bmod n} \\
&= 2\R_X(t) .
\end{align*}
\item $R_{\doub(X),\n(Y)}(t)=0$ and $R_{\n(Y),\doub(X)}(t)=0$.

The $r$th entry of $\doub(X)$ is $x_{r\bmod n}$ and the $r$th entry of $\n(Y)$ is $(-1)^{\floor{r/n}}y_{r\bmod n}$.  Then $R_{\doub(X),\n(Y)}(t)$ is
\begin{align*}
&\sum_{r=0}^{2n-1}x_{r\bmod n}(-1)^{\floor{(r+t)/n}}y^*_{r+t\bmod n} \\
&= \sum_{r=0}^{n-1}x_{r}(-1)^{\floor{(r+t)/n}}y^*_{r+t\bmod n} \\
&\quad+ \sum_{r=0}^{n-1}x_{r}(-1)^{\floor{(r+n+t)/n}}y^*_{r+t\bmod n} \\
&= \R_{X,Y}(t) - \R_{X,Y}(t) = 0.
\end{align*}
The second property follows because $R_{X,Y}(t)=R_{Y,X}(-t)^*$ for all $X$, $Y$, and $t$.
\item $R_{X\shuffle Y}(2t)=R_X(t)+R_Y(t)$.

The $(2r)$th entry of $X\shuffle Y$ is $x_r$ and the $(2r+1)$th entry is $y_r$.  Then $R_{X\shuffle Y}(2t)$ is
\begin{align*} &\sum_{\substack{r=0\\\text{$r$ even}}}^{2n-1}x_{r/2}x^*_{r/2+t\bmod n} + \sum_{\substack{r=0\\\text{$r$ odd}}}^{2n-1}y_{(r-1)/2}y^*_{(r-1)/2+t\bmod n} \\
&= R_X(t) + R_Y(t) . \end{align*}
\item $R_{X\shuffle Y}(2t+1)=R_{X,Y}(t)+R_{Y,X}(t+1)$.

The $(2r)$th entry of $X\shuffle Y$ is $x_r$ and the $(2r+1)$th entry is $y_r$.  Then $R_{X\shuffle Y}(2t+1)$ is
\begin{align*} &\sum_{\substack{r=0\\\text{$r$ even}}}^{2n-1}x_{r/2}y^*_{r/2+t\bmod n} + \sum_{\substack{r=0\\\text{$r$ odd}}}^{2n-1}y_{(r-1)/2}x^*_{(r+1)/2+t\bmod n} \\
&= R_{X,Y}(t) + R_{Y,X}(t+1) . \end{align*}
\item If $X$ is symmetric then $\doub(X)$ is symmetric.

Note that $x_{2n-r\bmod n}=x_{n-r\bmod n}=x_{r\bmod n}$.  Thus the $(2n-r)$th entry of $\doub(X)$
is equal to the $r$th entry, as required.
\item If $X$ is antipalindromic and of even length then $\n(X)$ is palindromic.

Let $Y$ be the first half of $X$ (i.e., $X=[Y;-\tilde Y]$) so that
$\n(X)=[Y;-\tilde Y;-Y;\tilde Y]$
is a palindrome.
\item If $X$ is symmetric and $Y$ is palindromic then $X\shuffle Y$ is symmetric.

First, we show the even entries of $X\shuffle Y$ satisfy the symmetric property.  The $(2r)$th entry of $X\shuffle Y$ is $x_r$
and the $(2n-2r)$th entry of $X\shuffle Y$ is $x_{n-r}$ (for $r\neq0$).  Since $X$ is symmetric $x_{n-r}=x_r$ showing the
$(2r)$th and $(2n-2r)$th entries are equal.

Second, we show the odd entries of $X\shuffle Y$ satisfy the symmetric property.
The $(2r+1)$th entry of $X\shuffle Y$ is $y_r$ and the $(2n-2r-1)$the entry of $X\shuffle Y$ is $y_{n-r-1}$.
Since $Y$ is palindromic $y_{n-r-1}=y_r$ showing the
$(2r+1)$th and $(2n-2r-1)$th entries are equal.

\item If $X$ is antisymmetric and of odd length then $\n(X)$ is symmetric.

Let $Y$ be the first half of the palindromic part of $X$ (i.e., $X=[x_0;Y;-\tilde Y]$)
so that
$\n(X) = [x_0;Y;-\tilde Y;-x_0;-Y;\tilde Y]$
is symmetric.

\item If $X$ is palindromic then $\doub(X)$ is palindromic.

Note that $x_{2n-r-1\bmod n}=x_{n-r-1\bmod n}=x_{r\bmod n}$.  Thus the $(2n-r-1)$th entry of $\doub(X)$
is equal to the $r$th entry, as required.
\end{enumerate}

\subsection{List of odd perfect quaternion sequences}

We now give palindromic odd perfect $\Q$-sequences in all lengths $n<70$ except for $35$, $47$, $53$, $59$, $65$, and $67$.
The sequences in odd lengths were constructed using Lemma~\ref{lem:negacor} with a
previously known set of Williamson sequences of length~$n$.  The sequences in even lengths were constructed using
Lemma~\ref{lem:oddprod}, Theorem~\ref{thm:oddperfect}, and Lemma~\ref{lem:pal} and are new
to the best of our knowledge, though perfect quaternion sequences
may be constructed in these lengths using the results of Barrera Acevedo and Dietrich~\cite{acevedo2019new}.

The sequences are denoted by $P_n$ where~$n$ is the length of the sequence.
The symbols~$\verb|+|$ and~$\verb|-|$ denote~$1$ and~$-1$,
capitalization denotes negation of an entry, and an overlined entry denotes
left multiplication by $q$, i.e., the symbol $\mathrlap{\overline{\color{white}{\phantom{\mathtt{I}}}}}\verb|I|$ denotes the entry $-qi$.

{\small\setlength{\parindent}{0pt}
$P_{1}=[\verb|+|]$

$P_{2}=[\verb|+|\verb|+|]$

$P_{3}=[\verb|I|\d\verb|-|\verb|I|]$

$P_{4}=[\verb|+|\verb|i|\verb|i|\verb|+|]$

$P_{5}=[\verb|J|\verb|+|\d\verb|+|\verb|+|\verb|J|]$

$P_{6}=[\verb|i|\d\verb|-|\verb|I|\verb|I|\d\verb|-|\verb|i|]$

$P_{7}=[\verb|j|\verb|-|\verb|J|\d\verb|-|\verb|J|\verb|-|\verb|j|]$

$P_{8}=[\verb|-|\verb|j|\verb|k|\verb|i|\verb|i|\verb|k|\verb|j|\verb|-|]$

$P_{9}=[\verb|K|\verb|j|\verb|I|\verb|+|\d\verb|+|\verb|+|\verb|I|\verb|j|\verb|K|]$

$P_{10}=[\verb|J|\verb|-|\d\verb|-|\verb|+|\verb|J|\verb|J|\verb|+|\d\verb|-|\verb|-|\verb|J|]$

$P_{11}=[\verb|i|\verb|+|\verb|K|\verb|-|\verb|J|\d\verb|-|\verb|J|\verb|-|\verb|K|\verb|+|\verb|i|]$

$P_{12}=[\verb|+|\d\verb|-|\verb|i|\verb|+|\d\verb|K|\verb|I|\verb|I|\d\verb|K|\verb|+|\verb|i|\d\verb|-|\verb|+|]$

$P_{13}=[\verb|K|\verb|i|\verb|J|\verb|K|\verb|j|\verb|i|\d\verb|+|\verb|i|\verb|j|\verb|K|\verb|J|\verb|i|\verb|K|]$

$P_{14}=[\verb|J|\verb|-|\verb|J|\d\verb|+|\verb|j|\verb|-|\verb|j|\verb|j|\verb|-|\verb|j|\d\verb|+|\verb|J|\verb|-|\verb|J|]$

$P_{15}=[\verb|J|\verb|j|\verb|-|\verb|J|\verb|J|\verb|+|\verb|+|\d\verb|-|\verb|+|\verb|+|\verb|J|\verb|J|\verb|-|\verb|j|\verb|J|]$

$P_{16}=[\verb|-|\verb|-|\verb|j|\verb|J|\verb|K|\verb|k|\verb|i|\verb|i|\verb|i|\verb|i|\verb|k|\verb|K|\verb|J|\verb|j|\verb|-|\verb|-|]$

$P_{17}=[\verb|+|\verb|i|\verb|K|\verb|+|\verb|+|\verb|j|\verb|j|\verb|J|\d\verb|+|\verb|J|\verb|j|\verb|j|\verb|+|\verb|+|\verb|K|\verb|i|\verb|+|]$

$P_{18}=[\verb|K|\verb|J|\verb|i|\verb|+|\d\verb|+|\verb|-|\verb|i|\verb|j|\verb|K|\verb|K|\verb|j|\verb|i|\verb|-|\d\verb|+|\verb|+|\verb|i|\verb|J|\verb|K|]$

$P_{19}=[\verb|k|\verb|K|\verb|j|\verb|-|\verb|J|\verb|J|\verb|K|\verb|+|\verb|+|\d\verb|-|\verb|+|\verb|+|\verb|K|\verb|J|\verb|J|\verb|-|\verb|j|\verb|K|\verb|k|]$

$P_{20}=[\verb|k|\verb|-|\d\verb|+|\verb|i|\verb|K|\verb|j|\verb|+|\d\verb|k|\verb|i|\verb|J|\verb|J|\verb|i|\d\verb|k|\verb|+|\verb|j|\verb|K|\verb|i|\d\verb|+|\verb|-|\verb|k|]$

$P_{21}=[\verb|+|\verb|+|\verb|I|\verb|I|\verb|i|\verb|+|\verb|+|\verb|i|\verb|+|\verb|-|\d\verb|+|\verb|-|\verb|+|\verb|i|\verb|+|\verb|+|\verb|i|\verb|I|\verb|I|\verb|+|\verb|+|]$

$P_{22}=[\verb|I|\verb|+|\verb|K|\verb|+|\verb|j|\d\verb|-|\verb|J|\verb|+|\verb|k|\verb|+|\verb|i|\verb|i|\verb|+|\verb|k|\verb|+|\verb|J|\d\verb|-|\verb|j|\verb|+|\verb|K|\verb|+|\verb|I|]$

$P_{23}=[\verb|j|\verb|+|\verb|k|\verb|-|\verb|i|\verb|-|\verb|-|\verb|K|\verb|J|\verb|i|\verb|I|\d\verb|-|\verb|I|\verb|i|\verb|J|\verb|K|\verb|-|\verb|-|\verb|i|\verb|-|\verb|k|\verb|+|\verb|j|]$

$P_{24}=[\verb|+|\d\verb|J|\verb|k|\verb|I|\d\verb|+|\verb|k|\verb|j|\d\verb|k|\verb|-|\verb|J|\d\verb|I|\verb|i|\verb|i|\d\verb|I|\verb|J|\verb|-|\d\verb|k|\verb|j|\verb|k|\d\verb|+|\verb|I|\verb|k|\d\verb|J|\verb|+|]$

$P_{25}=[\verb|k|\verb|J|\verb|J|\verb|I|\verb|i|\verb|k|\verb|k|\verb|+|\verb|j|\verb|-|\verb|+|\verb|i|\d\verb|+|\verb|i|\verb|+|\verb|-|\verb|j|\verb|+|\verb|k|\verb|k|\verb|i|\verb|I|\verb|J|\verb|J|\verb|k|]$

$P_{26}=[\verb|K|\verb|I|\verb|j|\verb|K|\verb|j|\verb|I|\d\verb|-|\verb|i|\verb|j|\verb|k|\verb|j|\verb|i|\verb|K|\verb|K|\verb|i|\verb|j|\verb|k|\verb|j|\verb|i|\d\verb|-|\verb|I|\verb|j|\verb|K|\verb|j|\verb|I|\verb|K|]$

$P_{27}=[\verb|I|\verb|i|\verb|k|\verb|J|\verb|I|\verb|I|\verb|-|\verb|j|\verb|+|\verb|K|\verb|-|\verb|J|\verb|K|\d\verb|-|\verb|K|\verb|J|\verb|-|\verb|K|\verb|+|\verb|j|\verb|-|\verb|I|\verb|I|\verb|J|\verb|k|\verb|i|\verb|I|]$

$P_{28}=[\verb|k|\verb|-|\verb|j|\d\verb|k|\verb|k|\verb|+|\verb|j|\verb|K|\verb|i|\verb|j|\d\verb|-|\verb|K|\verb|I|\verb|j|\verb|j|\verb|I|\verb|K|\d\verb|-|\verb|j|\verb|i|\verb|K|\verb|j|\verb|+|\verb|k|\d\verb|k|\verb|j|\verb|-|\verb|k|]$

$P_{29}=[\verb|+|\verb|i|\verb|I|\verb|I|\verb|K|\verb|k|\verb|j|\verb|J|\verb|k|\verb|I|\verb|I|\verb|K|\verb|K|\verb|K|\d\verb|+|\verb|K|\verb|K|\verb|K|\verb|I|\verb|I|\verb|k|\verb|J|\verb|j|\verb|k|\verb|K|\verb|I|\verb|I|\verb|i|\verb|+|]$

$P_{30}=[\verb|j|\verb|j|\verb|-|\verb|j|\verb|j|\verb|+|\verb|+|\d\verb|+|\verb|-|\verb|+|\verb|J|\verb|j|\verb|+|\verb|j|\verb|J|\verb|J|\verb|j|\verb|+|\verb|j|\verb|J|\verb|+|\verb|-|\d\verb|+|\verb|+|\verb|+|\verb|j|\verb|j|\verb|-|\verb|j|\verb|j|]$

$P_{31}=[\verb|-|\verb|I|\verb|I|\verb|+|\verb|i|\verb|-|\verb|I|\verb|-|\verb|-|\verb|-|\verb|I|\verb|i|\verb|I|\verb|i|\verb|i|\d\verb|-|\verb|i|\verb|i|\verb|I|\verb|i|\verb|I|\verb|-|\verb|-|\verb|-|\verb|I|\verb|-|\verb|i|\verb|+|\verb|I|\verb|I|\verb|-|]$

$P_{32}=[\verb|-|\verb|-|\verb|-|\verb|+|\verb|j|\verb|J|\verb|j|\verb|j|\verb|k|\verb|k|\verb|K|\verb|k|\verb|I|\verb|i|\verb|i|\verb|i|\verb|i|\verb|i|\verb|i|\verb|I|\verb|k|\verb|K|\verb|k|\verb|k|\verb|j|\verb|j|\verb|J|\verb|j|\verb|+|\verb|-|\verb|-|\verb|-|]$

$P_{33}=[\verb|J|\verb|+|\verb|I|\verb|j|\verb|K|\verb|K|\verb|k|\verb|j|\verb|k|\verb|k|\verb|+|\verb|I|\verb|i|\verb|k|\verb|-|\verb|i|\d\verb|+|\verb|i|\verb|-|\verb|k|\verb|i|\verb|I|\verb|+|\verb|k|\verb|k|\verb|j|\verb|k|\verb|K|\verb|K|\verb|j|\verb|I|\verb|+|\verb|J|]$

$P_{34}=[\verb|+|\verb|I|\verb|k|\verb|+|\verb|+|\verb|J|\verb|J|\verb|J|\d\verb|+|\verb|j|\verb|J|\verb|j|\verb|+|\verb|-|\verb|k|\verb|i|\verb|+|\verb|+|\verb|i|\verb|k|\verb|-|\verb|+|\verb|j|\verb|J|\verb|j|\d\verb|+|\verb|J|\verb|J|\verb|J|\verb|+|\verb|+|\verb|k|\verb|I|\verb|+|]$

$P_{36}=[\verb|J|\verb|J|\verb|I|\verb|i|\d\verb|k|\verb|+|\verb|i|\verb|K|\verb|J|\verb|K|\verb|J|\verb|-|\verb|I|\d\verb|-|\verb|+|\verb|+|\verb|k|\verb|K|\verb|K|\verb|k|\verb|+|\verb|+|\d\verb|-|\verb|I|\verb|-|\verb|J|\verb|K|\verb|J|\verb|K|\verb|i|\verb|+|\d\verb|k|\verb|i|\verb|I|\verb|J|\verb|J|]$

$P_{37}=[\verb|j|\verb|i|\verb|k|\verb|I|\verb|I|\verb|J|\verb|I|\verb|+|\verb|K|\verb|k|\verb|J|\verb|-|\verb|J|\verb|J|\verb|+|\verb|-|\verb|K|\verb|+|\d\verb|+|\verb|+|\verb|K|\verb|-|\verb|+|\verb|J|\verb|J|\verb|-|\verb|J|\verb|k|\verb|K|\verb|+|\verb|I|\verb|J|\verb|I|\verb|I|\verb|k|\verb|i|\verb|j|]$

$P_{38}=[\verb|K|\verb|K|\verb|j|\verb|+|\verb|j|\verb|J|\verb|K|\verb|-|\verb|-|\d\verb|-|\verb|+|\verb|-|\verb|k|\verb|J|\verb|J|\verb|+|\verb|J|\verb|K|\verb|k|\verb|k|\verb|K|\verb|J|\verb|+|\verb|J|\verb|J|\verb|k|\verb|-|\verb|+|\d\verb|-|\verb|-|\verb|-|\verb|K|\verb|J|\verb|j|\verb|+|\verb|j|\verb|K|\verb|K|]$

$P_{39}=[\verb|J|\verb|k|\verb|k|\verb|-|\verb|J|\verb|+|\verb|I|\verb|+|\verb|K|\verb|k|\verb|J|\verb|+|\verb|+|\verb|+|\verb|k|\verb|j|\verb|K|\verb|J|\verb|j|\d\verb|-|\verb|j|\verb|J|\verb|K|\verb|j|\verb|k|\verb|+|\verb|+|\verb|+|\verb|J|\verb|k|\verb|K|\verb|+|\verb|I|\verb|+|\verb|J|\verb|-|\verb|k|\verb|k|\verb|J|]$

$P_{40}=[\verb|k|\verb|K|\d\verb|I|\verb|+|\verb|j|\verb|-|\verb|K|\d\verb|K|\verb|I|\verb|I|\verb|+|\verb|-|\d\verb|+|\verb|J|\verb|I|\verb|K|\verb|i|\d\verb|j|\verb|j|\verb|j|\verb|j|\verb|j|\d\verb|j|\verb|i|\verb|K|\verb|I|\verb|J|\d\verb|+|\verb|-|\verb|+|\verb|I|\verb|I|\d\verb|K|\verb|K|\verb|-|\verb|j|\verb|+|\d\verb|I|\verb|K|\verb|k|]$

$P_{41}=[\verb|i|\verb|K|\verb|k|\verb|K|\verb|i|\verb|k|\verb|I|\verb|I|\verb|K|\verb|K|\verb|i|\verb|k|\verb|i|\verb|K|\verb|i|\verb|i|\verb|I|\verb|i|\verb|k|\verb|k|\d\verb|+|\verb|k|\verb|k|\verb|i|\verb|I|\verb|i|\verb|i|\verb|K|\verb|i|\verb|k|\verb|i|\verb|K|\verb|K|\verb|I|\verb|I|\verb|k|\verb|i|\verb|K|\verb|k|\verb|K|\\\phantom{P_{00}=[}\verb|i|]$

$P_{42}=[\verb|+|\verb|-|\verb|i|\verb|I|\verb|i|\verb|-|\verb|-|\verb|i|\verb|+|\verb|+|\d\verb|-|\verb|-|\verb|+|\verb|I|\verb|-|\verb|+|\verb|i|\verb|i|\verb|i|\verb|+|\verb|+|\verb|+|\verb|+|\verb|i|\verb|i|\verb|i|\verb|+|\verb|-|\verb|I|\verb|+|\verb|-|\d\verb|-|\verb|+|\verb|+|\verb|i|\verb|-|\verb|-|\verb|i|\verb|I|\verb|i|\\\phantom{P_{00}=[}\verb|-|\verb|+|]$

$P_{43}=[\verb|j|\verb|-|\verb|I|\verb|J|\verb|J|\verb|j|\verb|I|\verb|K|\verb|K|\verb|K|\verb|+|\verb|j|\verb|+|\verb|k|\verb|-|\verb|+|\verb|K|\verb|i|\verb|j|\verb|K|\verb|-|\d\verb|+|\verb|-|\verb|K|\verb|j|\verb|i|\verb|K|\verb|+|\verb|-|\verb|k|\verb|+|\verb|j|\verb|+|\verb|K|\verb|K|\verb|K|\verb|I|\verb|j|\verb|J|\verb|J|\\\phantom{P_{00}=[}\verb|I|\verb|-|\verb|j|]$

$P_{44}=[\verb|-|\verb|+|\verb|k|\verb|i|\verb|k|\d\verb|+|\verb|J|\verb|I|\verb|j|\verb|+|\verb|I|\verb|-|\verb|i|\verb|K|\verb|+|\verb|k|\d\verb|k|\verb|j|\verb|-|\verb|j|\verb|i|\verb|i|\verb|i|\verb|i|\verb|j|\verb|-|\verb|j|\d\verb|k|\verb|k|\verb|+|\verb|K|\verb|i|\verb|-|\verb|I|\verb|+|\verb|j|\verb|I|\verb|J|\d\verb|+|\verb|k|\\\phantom{P_{00}=[}\verb|i|\verb|k|\verb|+|\verb|-|]$

$P_{45}=[\verb|k|\verb|k|\verb|k|\verb|K|\verb|k|\verb|I|\verb|i|\verb|K|\verb|I|\verb|k|\verb|i|\verb|k|\verb|K|\verb|K|\verb|I|\verb|K|\verb|I|\verb|i|\verb|K|\verb|K|\verb|i|\verb|i|\d\verb|+|\verb|i|\verb|i|\verb|K|\verb|K|\verb|i|\verb|I|\verb|K|\verb|I|\verb|K|\verb|K|\verb|k|\verb|i|\verb|k|\verb|I|\verb|K|\verb|i|\verb|I|\\\phantom{P_{00}=[}\verb|k|\verb|K|\verb|k|\verb|k|\verb|k|]$

$P_{46}=[\verb|J|\verb|+|\verb|k|\verb|+|\verb|I|\verb|-|\verb|-|\verb|k|\verb|j|\verb|i|\verb|I|\d\verb|+|\verb|i|\verb|i|\verb|J|\verb|k|\verb|+|\verb|-|\verb|i|\verb|+|\verb|K|\verb|+|\verb|j|\verb|j|\verb|+|\verb|K|\verb|+|\verb|i|\verb|-|\verb|+|\verb|k|\verb|J|\verb|i|\verb|i|\d\verb|+|\verb|I|\verb|i|\verb|j|\verb|k|\verb|-|\\\phantom{P_{00}=[}\verb|-|\verb|I|\verb|+|\verb|k|\verb|+|\verb|J|]$

$P_{48}=[\verb|+|\d\verb|K|\verb|J|\verb|J|\d\verb|I|\verb|K|\verb|i|\d\verb|+|\verb|I|\verb|i|\d\verb|I|\verb|K|\verb|J|\d\verb|j|\verb|-|\verb|+|\d\verb|K|\verb|+|\verb|j|\d\verb|J|\verb|k|\verb|k|\d\verb|+|\verb|i|\verb|i|\d\verb|+|\verb|k|\verb|k|\d\verb|J|\verb|j|\verb|+|\d\verb|K|\verb|+|\verb|-|\d\verb|j|\verb|J|\verb|K|\d\verb|I|\verb|i|\verb|I|\\\phantom{P_{00}=[}\d\verb|+|\verb|i|\verb|K|\d\verb|I|\verb|J|\verb|J|\d\verb|K|\verb|+|]$

$P_{49}=[\verb|k|\verb|k|\verb|i|\verb|I|\verb|I|\verb|k|\verb|I|\verb|I|\verb|i|\verb|I|\verb|I|\verb|I|\verb|k|\verb|i|\verb|K|\verb|k|\verb|i|\verb|k|\verb|k|\verb|I|\verb|I|\verb|K|\verb|k|\verb|K|\d\verb|+|\verb|K|\verb|k|\verb|K|\verb|I|\verb|I|\verb|k|\verb|k|\verb|i|\verb|k|\verb|K|\verb|i|\verb|k|\verb|I|\verb|I|\verb|I|\\\phantom{P_{00}=[}\verb|i|\verb|I|\verb|I|\verb|k|\verb|I|\verb|I|\verb|i|\verb|k|\verb|k|]$

$P_{50}=[\verb|k|\verb|j|\verb|j|\verb|I|\verb|i|\verb|K|\verb|K|\verb|+|\verb|j|\verb|+|\verb|-|\verb|i|\d\verb|+|\verb|I|\verb|-|\verb|-|\verb|j|\verb|-|\verb|K|\verb|k|\verb|i|\verb|i|\verb|j|\verb|J|\verb|k|\verb|k|\verb|J|\verb|j|\verb|i|\verb|i|\verb|k|\verb|K|\verb|-|\verb|j|\verb|-|\verb|-|\verb|I|\d\verb|+|\verb|i|\verb|-|\\\phantom{P_{00}=[}\verb|+|\verb|j|\verb|+|\verb|K|\verb|K|\verb|i|\verb|I|\verb|j|\verb|j|\verb|k|]$

$P_{51}=[\verb|J|\verb|J|\verb|-|\verb|j|\verb|+|\verb|-|\verb|+|\verb|-|\verb|j|\verb|+|\verb|+|\verb|j|\verb|+|\verb|J|\verb|+|\verb|+|\verb|+|\verb|j|\verb|-|\verb|j|\verb|J|\verb|-|\verb|-|\verb|+|\verb|j|\d\verb|+|\verb|j|\verb|+|\verb|-|\verb|-|\verb|J|\verb|j|\verb|-|\verb|j|\verb|+|\verb|+|\verb|+|\verb|J|\verb|+|\verb|j|\\\phantom{P_{00}=[}\verb|+|\verb|+|\verb|j|\verb|-|\verb|+|\verb|-|\verb|+|\verb|j|\verb|-|\verb|J|\verb|J|]$

$P_{52}=[\verb|J|\verb|I|\verb|J|\verb|j|\verb|k|\verb|i|\d\verb|-|\verb|+|\verb|K|\verb|k|\verb|J|\verb|-|\verb|j|\verb|k|\verb|i|\verb|K|\verb|j|\verb|j|\verb|I|\d\verb|K|\verb|+|\verb|J|\verb|K|\verb|K|\verb|-|\verb|K|\verb|K|\verb|-|\verb|K|\verb|K|\verb|J|\verb|+|\d\verb|K|\verb|I|\verb|j|\verb|j|\verb|K|\verb|i|\verb|k|\verb|j|\\\phantom{P_{00}=[}\verb|-|\verb|J|\verb|k|\verb|K|\verb|+|\d\verb|-|\verb|i|\verb|k|\verb|j|\verb|J|\verb|I|\verb|J|]$

$P_{54}=[\verb|i|\verb|i|\verb|k|\verb|j|\verb|i|\verb|I|\verb|-|\verb|J|\verb|-|\verb|K|\verb|-|\verb|j|\verb|k|\d\verb|-|\verb|K|\verb|j|\verb|+|\verb|K|\verb|+|\verb|J|\verb|+|\verb|I|\verb|I|\verb|j|\verb|K|\verb|i|\verb|I|\verb|I|\verb|i|\verb|K|\verb|j|\verb|I|\verb|I|\verb|+|\verb|J|\verb|+|\verb|K|\verb|+|\verb|j|\verb|K|\\\phantom{P_{00}=[}\d\verb|-|\verb|k|\verb|j|\verb|-|\verb|K|\verb|-|\verb|J|\verb|-|\verb|I|\verb|i|\verb|j|\verb|k|\verb|i|\verb|i|]$

$P_{55}=[\verb|J|\verb|J|\verb|J|\verb|+|\verb|+|\verb|j|\verb|J|\verb|+|\verb|+|\verb|-|\verb|+|\verb|-|\verb|+|\verb|+|\verb|j|\verb|j|\verb|-|\verb|j|\verb|+|\verb|-|\verb|J|\verb|+|\verb|-|\verb|-|\verb|j|\verb|+|\verb|j|\d\verb|+|\verb|j|\verb|+|\verb|j|\verb|-|\verb|-|\verb|+|\verb|J|\verb|-|\verb|+|\verb|j|\verb|-|\verb|j|\\\phantom{P_{00}=[}\verb|j|\verb|+|\verb|+|\verb|-|\verb|+|\verb|-|\verb|+|\verb|+|\verb|J|\verb|j|\verb|+|\verb|+|\verb|J|\verb|J|\verb|J|]$

$P_{56}=[\verb|k|\verb|K|\verb|+|\d\verb|+|\verb|J|\verb|j|\verb|i|\verb|k|\verb|I|\verb|i|\d\verb|I|\verb|j|\verb|-|\verb|+|\verb|I|\verb|I|\verb|K|\d\verb|J|\verb|+|\verb|+|\verb|j|\verb|-|\verb|K|\verb|K|\d\verb|K|\verb|i|\verb|J|\verb|J|\verb|J|\verb|J|\verb|i|\d\verb|K|\verb|K|\verb|K|\verb|-|\verb|j|\verb|+|\verb|+|\d\verb|J|\verb|K|\\\phantom{P_{00}=[}\verb|I|\verb|I|\verb|+|\verb|-|\verb|j|\d\verb|I|\verb|i|\verb|I|\verb|k|\verb|i|\verb|j|\verb|J|\d\verb|+|\verb|+|\verb|K|\verb|k|]$

$P_{57}=[\verb|I|\verb|i|\verb|+|\verb|+|\verb|I|\verb|+|\verb|-|\verb|+|\verb|+|\verb|-|\verb|i|\verb|i|\verb|I|\verb|-|\verb|-|\verb|-|\verb|+|\verb|i|\verb|+|\verb|I|\verb|-|\verb|i|\verb|-|\verb|-|\verb|I|\verb|I|\verb|-|\verb|I|\d\verb|+|\verb|I|\verb|-|\verb|I|\verb|I|\verb|-|\verb|-|\verb|i|\verb|-|\verb|I|\verb|+|\verb|i|\\\phantom{P_{00}=[}\verb|+|\verb|-|\verb|-|\verb|-|\verb|I|\verb|i|\verb|i|\verb|-|\verb|+|\verb|+|\verb|-|\verb|+|\verb|I|\verb|+|\verb|+|\verb|i|\verb|I|]$

$P_{58}=[\verb|+|\verb|I|\verb|i|\verb|I|\verb|K|\verb|K|\verb|J|\verb|J|\verb|k|\verb|i|\verb|i|\verb|K|\verb|K|\verb|k|\d\verb|-|\verb|K|\verb|K|\verb|k|\verb|i|\verb|I|\verb|k|\verb|j|\verb|J|\verb|k|\verb|K|\verb|i|\verb|i|\verb|i|\verb|+|\verb|+|\verb|i|\verb|i|\verb|i|\verb|K|\verb|k|\verb|J|\verb|j|\verb|k|\verb|I|\verb|i|\\\phantom{P_{00}=[}\verb|k|\verb|K|\verb|K|\d\verb|-|\verb|k|\verb|K|\verb|K|\verb|i|\verb|i|\verb|k|\verb|J|\verb|J|\verb|K|\verb|K|\verb|I|\verb|i|\verb|I|\verb|+|]$

$P_{60}=[\verb|K|\verb|j|\verb|+|\verb|k|\verb|k|\verb|-|\verb|+|\d\verb|K|\verb|i|\verb|+|\verb|j|\verb|k|\verb|i|\verb|J|\verb|J|\verb|k|\verb|K|\verb|+|\verb|J|\verb|K|\verb|i|\verb|+|\d\verb|+|\verb|I|\verb|I|\verb|j|\verb|J|\verb|I|\verb|k|\verb|J|\verb|J|\verb|k|\verb|I|\verb|J|\verb|j|\verb|I|\verb|I|\d\verb|+|\verb|+|\verb|i|\\\phantom{P_{00}=[}\verb|K|\verb|J|\verb|+|\verb|K|\verb|k|\verb|J|\verb|J|\verb|i|\verb|k|\verb|j|\verb|+|\verb|i|\d\verb|K|\verb|+|\verb|-|\verb|k|\verb|k|\verb|+|\verb|j|\verb|K|]$

$P_{61}=[\verb|K|\verb|k|\verb|K|\verb|K|\verb|I|\verb|i|\verb|K|\verb|i|\verb|K|\verb|I|\verb|I|\verb|k|\verb|k|\verb|K|\verb|i|\verb|I|\verb|I|\verb|K|\verb|I|\verb|I|\verb|K|\verb|k|\verb|i|\verb|K|\verb|K|\verb|i|\verb|i|\verb|i|\verb|I|\verb|K|\d\verb|+|\verb|K|\verb|I|\verb|i|\verb|i|\verb|i|\verb|K|\verb|K|\verb|i|\verb|k|\\\phantom{P_{00}=[}\verb|K|\verb|I|\verb|I|\verb|K|\verb|I|\verb|I|\verb|i|\verb|K|\verb|k|\verb|k|\verb|I|\verb|I|\verb|K|\verb|i|\verb|K|\verb|i|\verb|I|\verb|K|\verb|K|\verb|k|\verb|K|]$

$P_{62}=[\verb|+|\verb|I|\verb|I|\verb|-|\verb|I|\verb|-|\verb|I|\verb|+|\verb|+|\verb|-|\verb|I|\verb|I|\verb|i|\verb|i|\verb|i|\d\verb|+|\verb|I|\verb|i|\verb|I|\verb|I|\verb|i|\verb|-|\verb|-|\verb|+|\verb|i|\verb|-|\verb|i|\verb|-|\verb|i|\verb|I|\verb|-|\verb|-|\verb|I|\verb|i|\verb|-|\verb|i|\verb|-|\verb|i|\verb|+|\verb|-|\\\phantom{P_{00}=[}\verb|-|\verb|i|\verb|I|\verb|I|\verb|i|\verb|I|\d\verb|+|\verb|i|\verb|i|\verb|i|\verb|I|\verb|I|\verb|-|\verb|+|\verb|+|\verb|I|\verb|-|\verb|I|\verb|-|\verb|I|\verb|I|\verb|+|]$

$P_{63}=[\verb|-|\verb|i|\verb|i|\verb|i|\verb|I|\verb|i|\verb|+|\verb|j|\verb|j|\verb|j|\verb|J|\verb|+|\verb|k|\verb|i|\verb|K|\verb|J|\verb|-|\verb|K|\verb|k|\verb|-|\verb|i|\verb|k|\verb|-|\verb|I|\verb|j|\verb|-|\verb|-|\verb|-|\verb|+|\verb|i|\verb|I|\d\verb|-|\verb|I|\verb|i|\verb|+|\verb|-|\verb|-|\verb|-|\verb|j|\verb|I|\\\phantom{P_{00}=[}\verb|-|\verb|k|\verb|i|\verb|-|\verb|k|\verb|K|\verb|-|\verb|J|\verb|K|\verb|i|\verb|k|\verb|+|\verb|J|\verb|j|\verb|j|\verb|j|\verb|+|\verb|i|\verb|I|\verb|i|\verb|i|\verb|i|\verb|-|]$

$P_{64}=[\verb|-|\verb|-|\verb|-|\verb|+|\verb|-|\verb|-|\verb|+|\verb|-|\verb|j|\verb|J|\verb|j|\verb|j|\verb|j|\verb|J|\verb|J|\verb|J|\verb|K|\verb|K|\verb|K|\verb|k|\verb|k|\verb|k|\verb|K|\verb|k|\verb|i|\verb|I|\verb|i|\verb|i|\verb|I|\verb|i|\verb|i|\verb|i|\verb|i|\verb|i|\verb|i|\verb|I|\verb|i|\verb|i|\verb|I|\verb|i|\\\phantom{P_{00}=[}\verb|k|\verb|K|\verb|k|\verb|k|\verb|k|\verb|K|\verb|K|\verb|K|\verb|J|\verb|J|\verb|J|\verb|j|\verb|j|\verb|j|\verb|J|\verb|j|\verb|-|\verb|+|\verb|-|\verb|-|\verb|+|\verb|-|\verb|-|\verb|-|]$

$P_{66}=[\verb|J|\verb|-|\verb|i|\verb|j|\verb|K|\verb|k|\verb|K|\verb|j|\verb|k|\verb|K|\verb|-|\verb|I|\verb|i|\verb|K|\verb|+|\verb|i|\d\verb|+|\verb|I|\verb|+|\verb|k|\verb|i|\verb|i|\verb|-|\verb|k|\verb|k|\verb|J|\verb|K|\verb|K|\verb|K|\verb|J|\verb|i|\verb|+|\verb|J|\verb|J|\verb|+|\verb|i|\verb|J|\verb|K|\verb|K|\verb|K|\\\phantom{P_{00}=[}\verb|J|\verb|k|\verb|k|\verb|-|\verb|i|\verb|i|\verb|k|\verb|+|\verb|I|\d\verb|+|\verb|i|\verb|+|\verb|K|\verb|i|\verb|I|\verb|-|\verb|K|\verb|k|\verb|j|\verb|K|\verb|k|\verb|K|\verb|j|\verb|i|\verb|-|\verb|J|]$

$P_{68}=[\verb|I|\verb|I|\verb|K|\verb|i|\verb|i|\verb|j|\verb|J|\verb|k|\d\verb|K|\verb|j|\verb|j|\verb|k|\verb|i|\verb|+|\verb|k|\verb|+|\verb|I|\verb|+|\verb|I|\verb|J|\verb|I|\verb|-|\verb|j|\verb|k|\verb|K|\d\verb|+|\verb|j|\verb|K|\verb|K|\verb|-|\verb|+|\verb|j|\verb|-|\verb|+|\verb|+|\verb|-|\verb|j|\verb|+|\verb|-|\verb|K|\\\phantom{P_{00}=[}\verb|K|\verb|j|\d\verb|+|\verb|K|\verb|k|\verb|j|\verb|-|\verb|I|\verb|J|\verb|I|\verb|+|\verb|I|\verb|+|\verb|k|\verb|+|\verb|i|\verb|k|\verb|j|\verb|j|\d\verb|K|\verb|k|\verb|J|\verb|j|\verb|i|\verb|i|\verb|K|\verb|I|\verb|I|]$

$P_{69}=[\verb|-|\verb|-|\verb|-|\verb|-|\verb|-|\verb|j|\verb|-|\verb|J|\verb|J|\verb|j|\verb|-|\verb|-|\verb|+|\verb|j|\verb|-|\verb|+|\verb|-|\verb|J|\verb|J|\verb|J|\verb|j|\verb|-|\verb|j|\verb|+|\verb|J|\verb|j|\verb|-|\verb|j|\verb|j|\verb|-|\verb|J|\verb|j|\verb|+|\verb|-|\d\verb|+|\verb|-|\verb|+|\verb|j|\verb|J|\verb|-|\\\phantom{P_{00}=[}\verb|j|\verb|j|\verb|-|\verb|j|\verb|J|\verb|+|\verb|j|\verb|-|\verb|j|\verb|J|\verb|J|\verb|J|\verb|-|\verb|+|\verb|-|\verb|j|\verb|+|\verb|-|\verb|-|\verb|j|\verb|J|\verb|J|\verb|-|\verb|j|\verb|-|\verb|-|\verb|-|\verb|-|\verb|-|]$}

\vspace{0.35cm}
\section*{Acknowledgment}
The authors would like to thank the editors and reviewers for their careful proofreading of our work.


\vspace{-0.35cm}
\begin{IEEEbiographynophoto}{Curtis Bright}
received a PhD in computer science from the University of Waterloo in 2017.  He was a research intern at Maplesoft (2017, 2018) and a postdoctoral fellow at the University of Waterloo (2017--2019) and Carleton University (2020).  He joined the faculty of the School of Computer Science at the University of Windsor in 2020 and starts as a tenure-track professor in 2021.  He is currently the lead developer of the MathCheck project for combining satisfiability solvers with computer algebra systems to solve mathematical conjectures.  His research interests include automated reasoning search algorithms, computer-assisted proofs, experimental mathematics, formal verification or counterexample generation of conjectures, and discrete mathematics.  He proved that Williamson sequences exist in all lengths that are powers of two as a postdoctoral fellow at the University of Waterloo under the supervision of professor Vijay Ganesh and as a research affiliate of Wilfrid Laurier University under the supervision of professor Ilias Kotsireas.
\end{IEEEbiographynophoto}
\vspace{-0.35cm}
\begin{IEEEbiographynophoto}{Ilias Kotsireas}(Member, IEEE)
serves as a professor of Computer Science at Wilfrid Laurier University (Waterloo, Ontario, Canada) and Director of the CARGO Lab. He has over 190 refereed journal and conference publications, chapters in books, edited books and special issues of journals, in the research areas of Computational Algebra, Metaheuristics, High-Performance Computing, Dynamical Systems and Combinatorial Design Theory. He serves on the Editorial Board of 8 international journals. He serves as the Managing Editor of~2 Springer journals and as Editor in Chief of a Springer journal and a Birkhauser book series. He has organized a very large number of international conferences in Europe, North America and Asia, often serving as a Program Committee Chair or General Chair. His research work has gathered approximately 1200 citations. His research is and has been funded by NSERC, the European Union and the National Natural Science Foundation of China (NSFC). He has received funding for conference organization from Maplesoft, the Fields Institute, and several Wilfrid Laurier University offices. He served as Chair of the ACM Special Interest Group on Symbolic Computation (SIGSAM) on a 4-year term: July 2013 to July 2017. He is a co-chair of the Applications of Computer Algebra Working Group (ACA WG), a group of 45 internationally renowned researchers that oversee the organization of the ACA conference series. He has delivered more than 10 invited/plenary talks at conferences around the world.
\end{IEEEbiographynophoto}
\vspace{-0.35cm}
\begin{IEEEbiographynophoto}{Vijay Ganesh}
is an associate professor at the University of Waterloo. Prior to joining Waterloo in 2012, he was a research scientist at MIT (2007--2012) and completed his PhD in computer science from Stanford University in 2007. Vijay's primary area of research is the theory and practice of automated reasoning aimed at software engineering, formal methods, security, and mathematics. In this context he has led the development of many SAT/SMT solvers, most notably, STP, the Z3 string solver, MapleSAT, and MathCheck. He has also proved several decidability and complexity results in the context of first-order theories. He has won over 25 research awards, best paper awards, distinctions, and medals for his research to-date. He recently won an ACM ISSTA Impact Paper Award 2019 (Best Paper in 10 Years Award at ISSTA conference), an ACM Test of Time Award at CCS 2016 (Best Paper in 10 Years Award at CCS), the Early Researcher Award (ERA) given by the Ontario Government in 2016, Outstanding Paper Award at ACSAC 2016, an IBM Research Faculty Award in 2015, two Google Research Faculty Awards in 2013 and 2011, a Ten-Year Most Influential paper citation at DATE 2008, and~10 best paper awards/honors of different kinds at conferences like CAV, IJCAI, CADE, ISSTA, SAT, SPLC, and CCS. His solvers STP and MapleSAT have won numerous awards at the highly competitive international SMT and SAT solver competitions. In 2013 he was invited to the first Heidelberg Laureate Forum, a gathering where a select group of young researchers from around the world met with Turing, Fields and Abel Laureates.
\end{IEEEbiographynophoto}

\end{document}